\documentclass[pra,aps,twocolumn,twoside,showpacs,superscriptaddress,letterpaper]{revtex4}

\usepackage{amsmath,amsfonts,amssymb,color,epsfig,graphics,graphicx,hyperref,latexsym,mathrsfs,revsymb,theorem,url,verbatim,epstopdf,mathtools,enumitem,tikz}
\usetikzlibrary{matrix}
\usetikzlibrary{decorations.pathreplacing}
\hypersetup{colorlinks,linkcolor={blue},citecolor={blue},urlcolor={red}}

\newtheorem{definition}{Definition}
\newtheorem{proposition}[definition]{Proposition}
\newtheorem{lemma}[definition]{Lemma}

\newtheorem{theorem}[definition]{Theorem}
\newtheorem{corollary}[definition]{Corollary}
\newtheorem{conjecture}[definition]{Conjecture}

\newtheorem{remark}[definition]{Remark}
\newtheorem{example}[definition]{Example}
\newtheorem{question}[definition]{Question}
\newtheorem{memo}[definition]{Memo}

\def\squareforqed{\hbox{\rlap{$\sqcap$}$\sqcup$}}
\def\qed{\ifmmode\squareforqed\else{\unskip\nobreak\hfil
\penalty50\hskip1em\null\nobreak\hfil\squareforqed
\parfillskip=0pt\finalhyphendemerits=0\endgraf}\fi}
\def\endenv{\ifmmode\;\else{\unskip\nobreak\hfil
\penalty50\hskip1em\null\nobreak\hfil\;
\parfillskip=0pt\finalhyphendemerits=0\endgraf}\fi}
\newenvironment{proof}{\noindent \textbf{{Proof.~} }}{\qed}
\def\Dbar{\leavevmode\lower.6ex\hbox to 0pt
{\hskip-.23ex\accent"16\hss}D}
\makeatletter
\def\url@leostyle{%
  \@ifundefined{selectfont}{\def\UrlFont{\sf}}{\def\UrlFont{\small\ttfamily}}}
\makeatother
\usepackage{subfigure}

\def\bcj{\begin{conjecture}}
\def\ecj{\end{conjecture}}
\def\bcr{\begin{corollary}}
\def\ecr{\end{corollary}}
\def\bd{\begin{definition}}
\def\ed{\end{definition}}
\def\bea{\begin{eqnarray}}
\def\eea{\end{eqnarray}}
\def\beq{\begin{equation}}
\def\eeq{\end{equation}}
\def\bal{\begin{aligned}}
\def\eal{\end{aligned}}
\def\bem{\begin{enumerate}}
\def\eem{\end{enumerate}}
\def\bex{\begin{example}}
\def\eex{\end{example}}
\def\bim{\begin{itemize}}
\def\eim{\end{itemize}}
\def\bl{\begin{lemma}}
\def\el{\end{lemma}}
\def\bma{\begin{bmatrix}}
\def\ema{\end{bmatrix}}
\def\bpf{\begin{proof}}
\def\epf{\end{proof}}
\def\bpp{\begin{proposition}}
\def\epp{\end{proposition}}
\def\bqu{\begin{question}}
\def\equ{\end{question}}
\def\br{\begin{remark}}
\def\er{\end{remark}}
\def\bt{\begin{theorem}}
\def\et{\end{theorem}}
\def\bmm{\begin{memo}}
\def\emm{\end{memo}}

\def\btb{\begin{tabular}}
\def\etb{\end{tabular}}

\newcommand{\nc}{\newcommand}

\def\a{\alpha}
\def\b{\beta}

\def\d{\delta}

\def\l{\lambda}
\def\m{\mu}

\def\p{\pi}
\def\r{\rho}
\def\s{\sigma}
\def\ta{\tau}

\def\ps{\psi}

\def\L{\Lambda}

\def\Ph{\Phi}

\nc{\bbA}{\mathbb{A}} \nc{\bbB}{\mathbb{B}} \nc{\bbC}{\mathbb{C}}
 \nc{\bbD}{\mathbb{D}} \nc{\bbE}{\mathbb{E}} \nc{\bbF}{\mathbb{F}}
 \nc{\bbG}{\mathbb{G}} \nc{\bbH}{\mathbb{H}} \nc{\bbI}{\mathbb{I}}
 \nc{\bbJ}{\mathbb{J}} \nc{\bbK}{\mathbb{K}} \nc{\bbL}{\mathbb{L}}
 \nc{\bbM}{\mathbb{M}} \nc{\bbN}{\mathbb{N}} \nc{\bbO}{\mathbb{O}}
 \nc{\bbP}{\mathbb{P}} \nc{\bbQ}{\mathbb{Q}} \nc{\bbR}{\mathbb{R}}
 \nc{\bbS}{\mathbb{S}} \nc{\bbT}{\mathbb{T}} \nc{\bbU}{\mathbb{U}}
 \nc{\bbV}{\mathbb{V}} \nc{\bbW}{\mathbb{W}} \nc{\bbX}{\mathbb{X}}
 \nc{\bbZ}{\mathbb{Z}}

 \nc{\bA}{{\bf A}} \nc{\bB}{{\bf B}} \nc{\bC}{{\bf C}}
 \nc{\bD}{{\bf D}} \nc{\bE}{{\bf E}} \nc{\bF}{{\bf F}}
 \nc{\bG}{{\bf G}} \nc{\bH}{{\bf H}} \nc{\bI}{{\bf I}}
 \nc{\bJ}{{\bf J}} \nc{\bK}{{\bf K}} \nc{\bL}{{\bf L}}
 \nc{\bM}{{\bf M}} \nc{\bN}{{\bf N}} \nc{\bO}{{\bf O}}
 \nc{\bP}{{\bf P}} \nc{\bQ}{{\bf Q}} \nc{\bR}{{\bf R}}
 \nc{\bS}{{\bf S}} \nc{\bT}{{\bf T}} \nc{\bU}{{\bf U}}
 \nc{\bV}{{\bf V}} \nc{\bW}{{\bf W}} \nc{\bX}{{\bf X}}
 \nc{\bZ}{{\bf Z}}

\nc{\cA}{{\cal A}} \nc{\cB}{{\cal B}} \nc{\cC}{{\cal C}}
\nc{\cD}{{\cal D}} \nc{\cE}{{\cal E}} \nc{\cF}{{\cal F}}
\nc{\cG}{{\cal G}} \nc{\cH}{{\cal H}} \nc{\cI}{{\cal I}}
\nc{\cJ}{{\cal J}} \nc{\cK}{{\cal K}} \nc{\cL}{{\cal L}}
\nc{\cM}{{\cal M}} \nc{\cN}{{\cal N}} \nc{\cO}{{\cal O}}
\nc{\cP}{{\cal P}} \nc{\cQ}{{\cal Q}} \nc{\cR}{{\cal R}}
\nc{\cS}{{\cal S}} \nc{\cT}{{\cal T}} \nc{\cU}{{\cal U}}
\nc{\cV}{{\cal V}} \nc{\cW}{{\cal W}} \nc{\cX}{{\cal X}}
\nc{\cZ}{{\cal Z}}

\nc{\hA}{{\hat{A}}} \nc{\hB}{{\hat{B}}} \nc{\hC}{{\hat{C}}}
\nc{\hD}{{\hat{D}}} \nc{\hE}{{\hat{E}}} \nc{\hF}{{\hat{F}}}
\nc{\hG}{{\hat{G}}} \nc{\hH}{{\hat{H}}} \nc{\hI}{{\hat{I}}}
\nc{\hJ}{{\hat{J}}} \nc{\hK}{{\hat{K}}} \nc{\hL}{{\hat{L}}}
\nc{\hM}{{\hat{M}}} \nc{\hN}{{\hat{N}}} \nc{\hO}{{\hat{O}}}
\nc{\hP}{{\hat{P}}} \nc{\hR}{{\hat{R}}} \nc{\hS}{{\hat{S}}}
\nc{\hT}{{\hat{T}}} \nc{\hU}{{\hat{U}}} \nc{\hV}{{\hat{V}}}
\nc{\hW}{{\hat{W}}} \nc{\hX}{{\hat{X}}} \nc{\hZ}{{\hat{Z}}}

\nc{\hn}{{\hat{n}}}

\def\diag{\mathop{\rm diag}}

\newcommand{\Tr}{\operatorname{Tr}}

\newcommand{\SEP}{\mathcal{SEP}}

\newcommand{\B}{\mathcal{B}}
\newcommand{\Hil}{\mathcal{H}}

\def\max{\mathop{\rm max}}
\def\min{\mathop{\rm min}}

\def\rank{\mathop{\rm rank}}

\def\tr{\mathop{\rm Tr}}

\def\dg{\dagger}

\def\ox{\otimes}

\usepackage{hyperref}
\hypersetup{colorlinks,linkcolor={blue},citecolor={blue},urlcolor={red}}
\usepackage{amsmath,amssymb}

\def\Lip{\mathrm{Lip}}

\begin{document}

\title{Quantifying Entanglement via Quantum Wasserstein Distances}
\date{\today}

\author{Enmin Shao}\email[]{by2509115@buaa.edu.cn}
\affiliation{LMIB(Beihang University), Ministry of education, and School of Mathematical Sciences, Beihang University, Beijing 100191, China}

\author{Lin Chen}\email[]{linchen@buaa.edu.cn (corresponding author)}
\affiliation{LMIB(Beihang University), Ministry of education, and School of Mathematical Sciences, Beihang University, Beijing 100191, China}

\author{Huixia He}\email[]{hehx@buaa.edu.cn}
\affiliation{LMIB(Beihang University), Ministry of education, and School of Mathematical Sciences, Beihang University, Beijing 100191, China}

\begin{abstract}
We propose a bipartite entanglement measure defined as the minimal order-1 quantum Wasserstein distance from a state to the set of separable states. Owing to the universal data-processing inequality of the Wasserstein metric, the measure satisfies all fundamental axioms within a single geometric framework. A Lipschitz dual formulation yields explicit lower bounds for pure and mixed states, a sharp constant for two-qubit systems, and an expected value for Haar-random pure states. We further establish a quantitative connection to entanglement witnesses: any negative witness expectation value certifies a lower bound, and the dual variational bound is exactly the maximal violation achievable by a Lipschitz-1 witness. The approach naturally provides subadditivity, trace-distance estimates, and bounds on local observables, while pointing toward large-deviation conjectures. This work introduces a framework at the interface of entanglement theory, optimal transport, and experimental entanglement detection.
\end{abstract}
\pacs{03.67.Mn, 03.65.Ud, 03.67.-a}
\maketitle

\section{Introduction}

Quantifying quantum entanglement is a central challenge in quantum information theory~\cite{Vedral1997}. A proper entanglement measure must satisfy a list of fundamental requirements: convexity, vanishing on separable states, invariance under local unitary operations, monotonicity under local operations and classical communication (LOCC), and continuity~\cite{Vedral1997,PlenioVirmani2007,Horodecki2009}. Numerous measures have been introduced over the past decades, among them the entanglement of formation~\cite{Bennett1996,Wootters1998}, the distillable entanglement~\cite{Rains1999,Horodecki2009}, the relative entropy of entanglement~\cite{Vedral1998,Audenaert2005}, the negativity~\cite{VidalWerner2002}, the geometric measure~\cite{WeiGoldbart2003}, and many others~\cite{ChristandlWinter2004,GourFriedland2013,Piani2009}. A particularly natural class consists of distance‑based measures, which define entanglement as the minimal distance from the state of interest to the set of separable states. Well‑known examples employ the relative entropy~\cite{Vedral1998}, the Bures metric~\cite{MarianMarian2008}, the trace distance~\cite{Markham2005,Eisert2007}, or the Hilbert--Schmidt distance~\cite{LuoHou2010}. Although such definitions are geometrically appealing, the proof of LOCC monotonicity does not follow automatically from the distance properties; it usually requires specific features of the chosen metric, and even slightly modified choices may fail --- the Hilbert--Schmidt distance, for instance, is known not to be LOCC monotone~\cite{Piani2009}. In addition, explicit computable lower bounds for these measures are scarce and often rely on ad hoc constructions such as entanglement witnesses~\cite{Terhal2000,GuhneToth2009,Chruscinski2014} or correlation tensors~\cite{Badziag2008,LiLuo2007}.

A recent breakthrough in quantum information geometry is the introduction of quantum Wasserstein distances~\cite{DePalma2021,CarlenMaas2017,Rouze2020}. In particular, the order‑1 quantum Wasserstein distance formulated by De Palma \emph{et al.}~\cite{DePalma2021} is constructed from a cost operator that is built from a complete set of local observables. The key property of this distance is a data‑processing inequality: for any completely positive trace‑preserving (CPTP) map, the distance between the output states never exceeds the distance between the input states. This property is the exact metric analogue of LOCC monotonicity. Consequently, the state space equipped with this distance becomes a quantum metric space that inherently respects the monotonicity demanded by entanglement theory.

We exploit this feature to construct a new entanglement measure. For any bipartite state, we define its entanglement measure as the infimum of the order‑1 quantum Wasserstein distance from the state to the set of all separable states. Because the Wasserstein distance satisfies the data‑processing inequality for every CPTP map, our measure inherits all five fundamental axioms directly from the geometry of the Wasserstein space; we present a compact proof of this fact in Proposition~\ref{prop:basic} of Section~III. To the best of our knowledge, this is the first entanglement measure that derives its complete set of axioms solely from a single geometric data-processing inequality, without requiring any additional ad hoc arguments.

Our main technical tool is the dual formulation of the Wasserstein distance, whose validity is rooted in the works of De Palma \emph{et al.}~\cite{DePalma2021}. The dual formulation expresses the distance as the maximum, over all observables with bounded Lipschitz norm, of the difference of their expectation values. By combining this dual representation with Sion's minimax theorem, we obtain a variational expression for the entanglement measure in terms of the support functional of the separable set. This dual formulation enables us to derive concrete lower bounds by selecting suitable test observables. For a pure state with Schmidt coefficients $\{\mu_i\}$, we choose the projector onto the maximally entangled state as a test observable, verify that its Lipschitz norm is bounded by one, and obtain a lower bound that involves the square of the sum of the square roots of the Schmidt coefficients. For the maximally entangled state, this bound becomes tight. Using this result, we compute the expected value of the bound over Haar‑random pure states and find it to be $\frac{\pi}{4}(1-1/d)$, indicating that large amounts of entanglement are typical.

For mixed states, we work in the Bloch representation with respect to local orthonormal operator bases. By establishing a relation between the Lipschitz seminorm and the entrywise $\ell_1$ norm of the coefficient matrix, we derive a universal lower bound expressed through the trace norm and the rank of the correlation matrix. Although this bound is not always tight, it provides an explicit constant that depends only on the local dimensions and the rank of the correlation matrix. In the special case of a two‑qubit system using the standard Pauli basis, we obtain a sharp estimate for the Lipschitz norm, which yields a precise constant of one third in the lower bound.

A particularly appealing aspect of our approach is its direct connection to experimentally accessible quantities. We prove in Theorem~\ref{thm:witness} that any entanglement witness whose Lipschitz norm is bounded by a known constant yields a quantitative lower bound on our measure: a measured negative expectation value of the witness immediately certifies a positive amount of entanglement. Furthermore, Proposition~\ref{prop:witness-variational} demonstrates that the optimal witness achieving the maximal violation of separability exactly realizes the dual formulation of the measure. This endows the abstract geometric construction with a transparent operational meaning and opens the possibility of estimating the measure via semidefinite programming.

Beyond these core results, our geometric framework naturally yields several additional properties. We establish subadditivity of the measure under tensor products, an estimate that relates the measure to the trace distance, and a bound for correlation functions of local observables. We also discuss the asymptotic behavior of the overlap between tensor powers of a state and separable states, which points towards a quantum Sanov theorem with our measure as the rate function; a full proof of the equality is left as an open problem.

From a broader perspective, the Wasserstein entanglement measure embeds entanglement theory into the mature field of optimal transport and non‑commutative metric geometry~\cite{Rieffel2004,LottVillani2009}. The commutator condition $\|[X,H]\|\le1$ is a quantum analogue of the classical $1$-Lipschitz condition, and the state space equipped with $W_1$ becomes a quantum metric space. This connection indicates potential links to free probability~\cite{Voiculescu1992,Guionnet2010}, quantum large deviations~\cite{OgawaNagaoka2000,Audenaert2008}, and convex geometry~\cite{Aubrun2017}. We also formulate several conjectures, including a quantum Talagrand-type inequality and a conjectured relation to hyperfiniteness of von Neumann algebras, which hint at deep links with the modular theory and the geometry of entangled states in infinite-dimensional systems~\cite{Haag2012}.

The paper is organized as follows. In Section~\ref{sec:was} we introduce the quantum Wasserstein distance and define the entanglement measure. Section~\ref{sec:bas} proves that the measure satisfies the five fundamental axioms in a single proposition. Lower bounds for pure states and mixed states are derived in Sections~\ref{sec:pure} and~\ref{sec:mixed}, respectively. Section~\ref{sec:cross} collects further bounds, including the connection to entanglement witnesses (Theorem~\ref{thm:witness} and Proposition~\ref{prop:witness-variational}), subadditivity, trace‑distance estimates, and a large‑deviation conjecture. We discuss the physical significance and mathematical outlook in Section~\ref{sec:outlook}, and we conclude in Section~\ref{con}.

\section{The quantum Wasserstein entanglement measure}
\label{sec:was}
In this section we introduce the mathematical setting of our theory. We first specify a set of local observables and construct the cost operator that defines the quantum Wasserstein distance. Then we state the distance itself and our new entanglement measure. Finally we derive a Lipschitz‑dual formulation that will be the central tool in all later proofs.

Let $\Hil_A=\bbC^{d_A}$, $\Hil_B=\bbC^{d_B}$ and $\Hil_{AB}=\Hil_A\ox\Hil_B$.
For system $A$ we choose a family $\{F_i^A\}_{i=1}^{d_A^2-1}$ of traceless Hermitian matrices satisfying
\begin{equation}\label{eq:orth}
\tr(F_i^A F_j^A)=\d_{ij},\qquad \|F_i^A\|_{op}\le 1\quad\forall\,i,
\end{equation} where $\|\cdot\|_{op}$ denotes the operator norm (i.e., the largest singular value).  
Such a family can be obtained by a suitable rescaling of the generalized Gell‑Mann matrices; for instance, the Pauli matrices divided by $\sqrt{2}$ work for qubits.



A similar family $\{F_j^B\}_{j=1}^{d_B^2-1}$ is fixed for system $B$. On $\Hil_{AB}$ we define
\[
X_{iB}^A = F_i^A\ox I_B,\qquad X_{jA}^B = I_A\ox F_j^B .
\]
Set $\cK_0=\{X_{iB}^A\}\cup\{X_{jA}^B\}$. Let $\mathcal{H}_{AB}^{(1)} = \mathcal{H}_A \otimes \mathcal{H}_B$ and $\mathcal{H}_{AB}^{(2)} = \mathcal{H}_A \otimes \mathcal{H}_B$ be two copies of the same Hilbert space. The cost operator acting on $\mathcal{H}_{AB}^{(1)} \otimes \mathcal{H}_{AB}^{(2)}$ is
\begin{equation}\label{eq:C}
C = \sum_{X\in\cK_0} (X\ox I_{AB}^{(2)} - I_{AB}^{(1)}\ox X)^2, 
\end{equation}
where $I_{AB}^{(1)} \in \B(\mathcal{H}_{AB}^{(1)}), I_{AB}^{(2)} \in \B(\mathcal{H}_{AB}^{(2)}).$
For two states $\r,\s\in\cD(\Hil_{AB})$, we define $\Pi(\r,\s)$ as the set of density operators on $\Hil_{AB}^{(1)}\ox\Hil_{AB}^{(2)}$ with marginals $\r,\s$. The quantum Wasserstein distance of order~$1$ is
\begin{equation}
    W_1(\r,\s)=\Bigl(\inf_{\ta\in\Pi(\r,\s)}\tr[C\ta]\Bigr)^{1/2}.
\end{equation}
\begin{definition}\label{def:ent}
The \emph{Wasserstein entanglement measure} is
\begin{equation}
\label{def}
    \cE(\r)=\inf_{\s\in\SEP(\Hil_{AB})} W_1(\r,\s),
\end{equation}
where $\SEP(\Hil_{AB})$ denotes the set of separable states.
\end{definition}

Having defined the distance, we now introduce a convenient Lipschitz seminorm that allows us to exploit convex duality. For any self‑adjoint $H\in\B(\Hil_{AB})$ we define
\begin{align}\label{eq:lipdef}
\|H\|_{\Lip}= \max\bigl\{
\sup_{A:\|A\|_{op}\le1}\|[A\ox I_B , H]\|_{op},\nonumber\\
\sup_{B:\|B\|_{op}\le1}\|[I_A\ox B , H]\|_{op}
\bigr\},
\end{align}
where $A\in\B(\Hil_{A}), B\in\B(\Hil_{B}).$
Because $\cK_0\subset\{A\ox I:\|A\|_{op}\le1\}\cup\{I\ox B:\|B\|_{op}\le1\}$, the condition $\|H\|_{\Lip}\le1$ implies $\|[X,H]\|_{op}\le1$ for all $X\in\cK_0$. As proven in the seminal works on quantum Wasserstein distances~\cite{DePalma2021}, the order‑1 distance satisfies the general dual representation
\begin{align}\label{eq:W1dualexact}
    W_1(\rho,\sigma) = \sup\bigl\{ \left|\Tr[H(\rho-\sigma)]\right| : \nonumber\\\|[L_r, H]\|_{op} \le 1 \text{ for all } L_r\in\mathcal{K}_0 \bigr\}.
\end{align}
Our Lipschitz seminorm is defined via a larger set of commutators, hence $\|H\|_{\Lip}\le 1$ guarantees that $H$ is admissible in the above dual. Consequently we obtain
\begin{equation}\label{eq:W1dual}
    W_1(\rho,\sigma) \ge \max_{H:\|H\|_{\Lip}\le 1} \bigl( \Tr[H\rho] - \Tr[H\sigma] \bigr).
\end{equation}
Taking the infimum over $\sigma\in\SEP$ and exchanging $\inf$ and $\max$ (justified by compactness and convexity) we arrive at the fundamental inequality
\begin{equation}\label{eq:emax}
    \cE(\rho) \ge \max_{H:\|H\|_{\Lip}\le 1} \Bigl( \Tr[H\rho] - \max_{\sigma\in\SEP}\Tr[H\sigma] \Bigr).
\end{equation}
We define the support functional of the separable set
\begin{equation}
\label{ah}
    \a(H) = \max_{\sigma\in\SEP}\Tr[H\sigma].
\end{equation}
Then we have $\cE(\rho) \ge \max_{H:\|H\|_{\Lip}\le 1}\bigl( \Tr[H\rho] - \a(H) \bigr)$.
\section{Basic properties}
\label{sec:bas}
In this section we show that the measure $\mathcal{E}(\rho)$ defined in \eqref{def} satisfies the five fundamental axioms commonly required for an entanglement measure. Proposition~\ref{prop:basic} states these properties and gives a unified proof. Each property follows directly from the data-processing inequality of the Wasserstein distance $W_1$ and the compactness of the set of separable states, without the need for separate ad-hoc arguments.
\begin{proposition}\label{prop:basic}
The functional $\cE$ satisfies
\begin{enumerate}[label=(\roman*)]
\item $\cE(\r)=0$ for every $\r\in\SEP$.
\item convexity, $\cE\bigl(\l\r_1+(1-\l)\r_2\bigr)\le \l\cE(\r_1)+(1-\l)\cE(\r_2)$, $0\le\l\le1$.
\item local unitary invariance, $\cE\bigl((U_A\ox U_B)\r(U_A^\dg\ox U_B^\dg)\bigr)=\cE(\r)$ for all unitaries $U_A,U_B$.
\item LOCC monotonicity, $\cE(\L(\r))\le\cE(\r)$ for any LOCC channel $\L$.
\item continuity, $\cE$ is continuous with respect to the trace norm on the finite‑dimensional state space.
\end{enumerate}
\end{proposition}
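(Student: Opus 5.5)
The plan is to work almost entirely with two facts about $W_1$: the data‑processing inequality $W_1(\Phi(\r),\Phi(\s))\le W_1(\r,\s)$ for CPTP $\Phi$, which is taken from~\cite{DePalma2021} as recalled in the Introduction, and the exact dual representation \eqref{eq:W1dualexact}. From the dual I also read off that $W_1$ is a (pseudo)metric. It is symmetric, and it satisfies the triangle inequality because it is a supremum of $|\Tr[H(\cdot)]|$ over a fixed admissible set, which is symmetric under $H\mapsto -H$. Since $\SEP(\Hil_{AB})$ is compact and $\s\mapsto W_1(\r,\s)$ is continuous (this follows from the continuity estimate in the last step), the infimum in \eqref{def} is attained. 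This lets me pick optimal separable states below.

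\emph{(i), (iii), (iv).} For $\r\in\SEP$ I choose $\s=\r$. Then $W_1(\r,\r)=0$ from the dual, since every term vanishes, and hence $\cE(\r)=0$.

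For (iv), let $\L$ be LOCC and $\s^\star$ an optimal separable state for $\r$. LOCC channels map $\SEP$ into $\SEP$, so
\[
\cE(\L(\r))\le W_1(\L(\r),\L(\s^\star))\le W_1(\r,\s^\star)=\cE(\r).
\]
Property (iii) then follows from (iv) applied twice. Conjugation by $U_A\ox U_B$ is LOCC, and so is its inverse, which gives both inequalities.

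\emph{(ii).} The dual formula writes $W_1(\r,\s)$ as a supremum of functions $(\r,\s)\mapsto|\Tr[H(\r-\s)]|$, each of which is jointly convex. Hence $W_1$ is jointly convex. Take optimal $\s_1,\s_2\in\SEP$ for $\r_1,\r_2$. Then $\l\s_1+(1-\l)\s_2\in\SEP$ by convexity of $\SEP$, and
\[
\cE(\l\r_1+(1-\l)\r_2)\le W_1\bigl(\l\r_1+(1-\l)\r_2,\ \l\s_1+(1-\l)\s_2\bigr)\le \l\cE(\r_1)+(1-\l)\cE(\r_2).
\]

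\emph{(v)} is the step I expect to need real work. The plan is to show $W_1(\r,\r')\le c\,\|\r-\r'\|_1$ with a dimension‑dependent constant $c$. The triangle inequality then gives $|\cE(\r)-\cE(\r')|\le W_1(\r,\r')\le c\|\r-\r'\|_1$, because $|\inf_\s W_1(\r,\s)-\inf_\s W_1(\r',\s)|\le W_1(\r,\r')$.

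To get the constant, note that in the dual one may replace $H$ by $H-\tfrac{\Tr H}{d_Ad_B}I$ without changing $\Tr[H(\r-\r')]$ or any commutator. On the traceless subspace, the map $H\mapsto\max_{X\in\cK_0}\|[X,H]\|_{op}$ is a seminorm. Its kernel consists of operators commuting with all $F_i^A\ox I$ and all $I\ox F_j^B$. The $F_i^A$ together with $I$ span $\B(\Hil_A)$, and likewise for $B$. So the algebra generated by $\cK_0$ is all of $\B(\Hil_{AB})$, and its commutant is $\bbC I$. Thus the kernel is trivial on traceless $H$, and the seminorm is a genuine norm there.

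By equivalence of norms in finite dimension, every admissible traceless $H$ satisfies $\|H\|_{op}\le c$. It follows that $|\Tr[H(\r-\r')]|\le c\|\r-\r'\|_1$, which is the desired bound. The main obstacle is simply to make sure this commutant argument is airtight and that the bound on $H$ holds uniformly. An explicit $c$ is not required for continuity, though one could try to extract it from the orthonormality relations \eqref{eq:orth}.
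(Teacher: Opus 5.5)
Taking the paper's imported facts for granted, your argument is correct. It coincides with the paper's proof in (i) and (iv) and differs in the other three items.

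\begin{itemize}
\item In (ii) you use \emph{joint} convexity of $W_1$, separate optimisers $\sigma_1,\sigma_2$, and convexity of $\SEP$. The paper fixes one $\sigma$, uses convexity in the first argument and takes the infimum. That only gives $\cE(\lambda\rho_1+(1-\lambda)\rho_2)\le\inf_{\sigma\in\SEP}\bigl[\lambda W_1(\rho_1,\sigma)+(1-\lambda)W_1(\rho_2,\sigma)\bigr]$, which in general exceeds $\lambda\cE(\rho_1)+(1-\lambda)\cE(\rho_2)$, because $\rho_1$ and $\rho_2$ have different nearest separable states. Your version is the one that actually proves convexity.
\item In (iii) you deduce invariance from (iv) applied to $U$ and $U^\dagger$. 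The paper instead checks that the cost operator \eqref{eq:C} is invariant under a local change of orthonormal basis. That is an argument about couplings and needs no data processing; yours is shorter but makes (iii) only as reliable as (iv).
\item In (v) you replace the paper's citation of a uniform trace-norm Lipschitz bound with a self-contained argument: trivial commutant of $\cK_0$, norm equivalence on traceless operators, and the triangle inequality. This also shows that $\cE$ is $1$-Lipschitz with respect to $W_1$ itself.
\end{itemize}

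One caution applies equally to both proofs. The data-processing inequality used in (iv), and through it your (iii), is taken on trust, and it already fails for a one-way LOCC channel under either reading of $W_1$.
\begin{itemize}
\item Take two qubits with $F_i=\sigma_i/\sqrt2$, and the channel with Kraus operators $|k\rangle\langle k|\otimes\sigma_1^{k}$, $k=0,1$. It maps the pair $|00\rangle,|10\rangle$ to $|00\rangle,|11\rangle$.
\item With the coupling definition, pure states admit only the product coupling, and $W_1$ grows from $\sqrt6$ to $2\sqrt2$.
\item With \eqref{eq:W1dualexact}, the input distance is at most $\sqrt2$, since $\langle00|H|00\rangle-\langle10|H|10\rangle=\langle00|[H,U]U|00\rangle$ with $U=\sigma_1\otimes I$. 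The output distance is at least $2\sqrt2$, as the admissible $H=(\sigma_3\otimes I+I\otimes\sigma_3)/\sqrt2$ shows.
\end{itemize}

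Moreover, \eqref{eq:W1dualexact} is not the dual of the coupling definition. For pure $\rho$ one has $\Pi(\rho,\rho)=\{\rho\otimes\rho\}$, hence $W_1(\rho,\rho)^2=2\sum_{X\in\cK_0}\operatorname{Var}_\rho(X)>0$. So the identity $W_1(\rho,\rho)=0$ that you read off in (i) holds only if the dual formula is taken as the definition. Under that convention your (i), (ii) and (v) are rigorous, but (iii) and (iv) need an argument other than data processing.
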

\begin{proof}
(i) If $\r\in\SEP$, we choose $\s=\r$ in Definition~\ref{def:ent}, then we have $W_1(\r,\r)=0$, $\cE(\r)=0$.
(ii) The distance $W_1$ is induced by a norm (the Lipschitz dual norm), therefore it is convex in each argument. For any $\s\in\SEP$,
\[
    W_1(\l\r_1+(1-\l)\r_2,\s)\le \l W_1(\r_1,\s)+(1-\l)W_1(\r_2,\s).
\]
Taking the infimum over $\s$ yields the convexity of $\cE$.
(iii) The cost operator $C$ in \eqref{eq:C} is built from the orthonormal bases $\{F_i^A\}$ and $\{F_j^B\}$. For a local unitary $U=U_A\ox U_B$, the families $\{U_AF_i^AU_A^\dg\}$ and $\{U_BF_j^BU_B^\dg\}$ are again traceless, orthonormal and satisfy the same norm bound. Hence the set $\cK_0' = \{U_AF_i^AU_A^\dg\ox I_B,\; I_A\ox U_BF_j^BU_B^\dg\}$ is another orthonormal basis of the same local operator spaces. Because the transition between any two orthonormal bases of the traceless Hermitian matrices is realized by an orthogonal transformation, the quantities $\sum_i X_i \otimes X_i$ and $\sum_i X_i^2$ are invariant under such a change of basis. Consequently, the sum defining the cost operator is unchanged as C is defined in \eqref{eq:C}
\begin{equation}
    \sum_{X'\in\cK_0'} (X'\ox I_{AB}^{(2)} - I_{AB}^{(1)}\ox X')^2 = C. 
\end{equation}
Consequently, for $U=(U_A\ox U_B)$, we have $W_1(U\r U^\dg, U\s U^\dg)=W_1(\r,\s)$ for all $\r,\s$. Because $U\SEP U^\dg=\SEP$, we obtain
\begin{align}
        \cE(U\r U^\dg)=\inf_{\s\in\SEP}W_1(U\r U^\dg,\s)\nonumber\\
= \inf_{\s\in\SEP}W_1(\r, U^\dg\s U) = \cE(\r).
\end{align}

(iv) Every LOCC channel $\L$ is CPTP. By the data processing inequality, we have $W_1(\L(\r),\L(\s))\le W_1(\r,\s)$ for all $\s$. Since $\L(\SEP)\subseteq\SEP$,
\begin{align}
    \cE(\L(\r))=\inf_{\s'\in\SEP}W_1(\L(\r),\s')
\le \inf_{\s\in\SEP}W_1(\L(\r),\L(\s))\nonumber\\
\le \inf_{\s\in\SEP}W_1(\r,\s)=\cE(\r).
\end{align}
(v) On the finite‑dimensional space, the map $\r\mapsto W_1(\r,\s)$ is Lipschitz continuous with respect to the trace norm, uniformly in $\s$ from \cite{DePalma2021}. The set $\SEP$ is compact, therefore the infimum of a family of equi‑Lipschitz functions is continuous.
\end{proof}

Proposition~\ref{prop:basic} demonstrates a significant advantage of our approach, all five axioms are proved in a single unified proposition,
while for most existing measures (e.g., entanglement of formation, relative entropy of entanglement, negativity) one or more properties require separate, often technically involved proofs.
Convexity follows from the convexity of \(W_1\), monotonicity from its universal data-processing inequality, and continuity from the compactness of separable states.
This economy is possible because \(W_1\) is built from a cost operator that respects the bipartite structure and because its dual Lipschitz norm faithfully captures the non-local character of quantum states.
Consequently, the Wasserstein entanglement measure provides not only a new quantifier but also a new \emph{framework} in which the axiomatic properties of entanglement are transparent consequences of the underlying metric geometry.
\section{Lower bound for pure states}\label{sec:pure}
Let $\r=|\ps\rangle\langle\ps|$ be pure with Schmidt decomposition
$|\ps\rangle=\sum_{i=1}^{d}\sqrt{\m_i}\,|u_i\rangle_A\ox|v_i\rangle_B$, where $\m_1\ge\cdots\ge\m_d>0$, $\sum_i\m_i=1$ and $d=\min\{d_A,d_B\}\ge 2$ (otherwise $\cE\equiv0$ trivially).
\begin{theorem}\label{th:pure}
For every pure state $\rho$, we have
\begin{equation}
    \cE(\r)\;\ge\; \frac{1}{d}\left(\Bigl(\sum_{i=1}^d\sqrt{\m_i}\Bigr)^2-1\right) \;\ge\; 0.
\end{equation}
For the maximally entangled state ($\m_i=1/d$), the right‑hand side equals $(d-1)/d$.
\end{theorem}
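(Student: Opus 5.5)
We will use the dual inequality \eqref{eq:emax}, $\cE(\r)\ge \Tr[H\r]-\a(H)$ for every $H$ with $\|H\|_{\Lip}\le1$. The test observable is $H=c\,P_\Phi$ with $P_\Phi=|\Phi\rangle\langle\Phi|$, where $|\Phi\rangle=\frac{1}{\sqrt d}\sum_{i=1}^d|u_i\rangle_A\ox|v_i\rangle_B$ is the maximally entangled state adapted to the Schmidt bases of $|\ps\rangle$, and $c>0$ is a normalization constant. By (iii) of Proposition~\ref{prop:basic} we may rotate the Schmidt bases to the computational bases, but this is only cosmetic, since $\|\cdot\|_{\Lip}$ is invariant under local unitaries.

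\emph{Step 1 (expectation and support functional).} We have $\Tr[P_\Phi\r]=|\langle\Phi|\ps\rangle|^2=\frac1d\bigl(\sum_i\sqrt{\m_i}\bigr)^2$. It is standard that $\max_{\s\in\SEP}\langle\Phi|\s|\Phi\rangle$ is attained on a product pure state $|a\rangle|b\rangle$. Moreover, $|\langle\Phi|a,b\rangle|^2=\frac1d|\langle \bar a_{\le d}|b_{\le d}\rangle|^2\le\frac1d$, so $\a(P_\Phi)=1/d$. Hence $\Tr[P_\Phi\r]-\a(P_\Phi)=\frac1d\bigl((\sum_i\sqrt{\m_i})^2-1\bigr)$. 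This quantity is nonnegative because $\sum_i\sqrt{\m_i}\ge\sqrt{\sum_i\m_i}=1$.

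\emph{Step 2 (Lipschitz norm).} To get exactly the stated bound we need $\|P_\Phi\|_{\Lip}\le1$, i.e. $c=1$. For any $A$ with $\|A\|_{op}\le1$, the commutator $[A\ox I,P_\Phi]=(A\ox I)P_\Phi-P_\Phi(A\ox I)$ is a difference of two rank-one operators. It can be written as $|x\rangle\langle\Phi|-|\Phi\rangle\langle y|$ with $|x\rangle=(A\ox I)|\Phi\rangle$ and $\langle y|=\langle\Phi|(A\ox I)$. Then $[A\ox I,P_\Phi]$ is (up to an $i$) anti-Hermitian, with $\|[A\ox I,P_\Phi]\|_{op}=\|Q(A\ox I)P_\Phi\|_{op}$-type bounds. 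Concretely, writing $X=A\ox I$, we have $[X,P]=QXP-PXQ$ with $Q=I-P$. This is block off-diagonal with respect to $P\oplus Q$, so its operator norm equals $\max(\|QXP\|,\|PXQ\|)\le\|X\|_{op}\le1$. The same argument applies to $I\ox B$. Hence $\|P_\Phi\|_{\Lip}\le1$ in the sense of \eqref{eq:lipdef}, and $H=P_\Phi$ is admissible.

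\emph{Step 3 (maximally entangled case).} For $\m_i=1/d$ we have $\sum_i\sqrt{\m_i}=\sqrt d$, so the bound becomes $(d-1)/d$.

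The main obstacle is Step 2. The block off-diagonal observation for a general projector makes it clean, but one has to check carefully that the supremum in \eqref{eq:lipdef} is taken over non-Hermitian $A$ as well. That case is still covered, since the decomposition $[X,P]=QXP-PXQ$ and the bound $\max(\|QXP\|,\|PXQ\|)\le\|X\|$ hold for arbitrary $X$. A secondary point is justifying $\a(P_\Phi)=1/d$ when $d_A\neq d_B$. The Cauchy--Schwarz computation restricted to the first $d$ components handles this.
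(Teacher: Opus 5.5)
Your proposal is correct and follows the paper's proof: you use the same test observable $|\Phi^+\rangle\langle\Phi^+|$ built from the Schmidt bases, obtain the same values $\Tr[H_0\rho]=\frac1d(\sum_i\sqrt{\mu_i})^2$ and $\alpha(H_0)=1/d$ via product states, and insert them into \eqref{eq:emax}. The only variation is the Lipschitz check: the paper writes the commutator norm as $\frac1d\Tr(A^\dagger A)-|\Tr A|^2/d^2$ (which is just $\|QXP\|_{op}^2$) and bounds it by $1$, whereas your block off-diagonal bound $\max(\|QXP\|_{op},\|PXQ\|_{op})\le\|X\|_{op}$ states the same idea more cleanly and, unlike the paper's literal estimate $\Tr(A^\dagger A)\le d$, stays valid on the larger factor when $d_A\neq d_B$ (there the trace should be compressed to the Schmidt span).
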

\begin{proof}
We first define the maximally entangled state projector
\[
    H_0 = |\Ph^+\rangle\langle\Ph^+| = \frac1d\sum_{i,j=1}^d |u_i v_i\rangle\langle u_j v_j|.
\]
Then we verify $\|H_0\|_{\Lip}\le1$. We let $A\in\B(\Hil_A)$ with $\|A\|_{op}\le1$. Since $H_0$ is a rank‑one projection, a direct computation yields
\begin{align}
        \|[A\ox I_B , H_0]\|_{op}^2 = \langle\Ph^+|(A^\dag A)\ox I_B|\Ph^+\rangle -\nonumber\\ |\langle\Ph^+|A\ox I_B|\Ph^+\rangle|^2 = \frac1d \Tr(A^\dag A) - \frac{|\Tr A|^2}{d^2}.
\end{align}

Because $\|A\|_{op}\le1$, we have $\Tr(A^\dag A)\le d\,\|A\|_{op}^2\le d$, hence
\[
    \|[A\ox I_B , H_0]\|_{op}^2 \le \frac{d}{d} = 1,
\]
so $\|[A\ox I_B , H_0]\|_{op}\le1$. An identical bound holds for $I_A\ox B$, and therefore $\|H_0\|_{\Lip}\le1$.
Now evaluate the two terms. Because $|\ps\rangle$ is supported on $span\{|u_i v_i\rangle\}$,
\[
    \Tr[H_0\r] = \langle\ps|H_0|\ps\rangle = \frac1d\Bigl(\sum_{i=1}^d\sqrt{\m_i}\Bigr)^2 .
\]
For any product state $\s_A\ox\s_B$, we write $\s_A = \sum_k p_k|\a_k\rangle\langle\a_k|$, $\s_B = \sum_l q_l|\b_l\rangle\langle\b_l|$. Then we have
\[
    \Tr[H_0(\s_A\ox\s_B)] = \frac1d \sum_{k,l} p_k q_l \,|\langle\a_k|\b_l^*\rangle|^2 \le \frac1d,
\]
where $|\b_l^*\rangle$ denotes complex conjugation in the Schmidt basis. Since any separable state is a convex combination of product states, $\a(H_0)=1/d$.
Inserting these values into~\eqref{eq:emax} yields
\[
    \cE(\r) \ge \Tr[H_0\r] - \a(H_0) = \frac1d\Bigl(\bigl(\sum_{i=1}^d\sqrt{\m_i}\bigr)^2 - 1\Bigr).
\]
Non‑negativity follows from $\sum_i\sqrt{\m_i}\ge\sqrt{\sum_i\m_i}=1$. For the maximally entangled state the bound equals $(d-1)/d$.
\end{proof}
\section{Lower bound for mixed states}\label{sec:mixed}
We now extend the analysis to arbitrary mixed states. The Bloch expansion with respect to the local bases $\{F_i^A\},\{F_j^B\}$ from \eqref{eq:orth} plays a central role.
Any bipartite state $\r$ can be uniquely written as
\begin{align}
\label{eq:bloch}
\r = \frac{I_{AB}}{d_A d_B}
+ \frac1{d_B}\sum_{i=1}^{d_A^2-1} s_i\,F_i^A\ox I_B
+ \frac1{d_A}\sum_{j=1}^{d_B^2-1} t_j\,I_A\ox F_j^B\nonumber\\
+ \sum_{i,j} r_{ij}\,F_i^A\ox F_j^B,
\end{align}
with the correlation matrix $R=(r_{ij})$ where
\begin{equation}\label{eq:coeff}
r_{ij} = \tr\bigl[\r\,(F_i^A\ox F_j^B)\bigr].
\end{equation}
We consider observables of the form
\begin{equation}\label{eq:Hform}
H = \sum_{i,j} h_{ij}\,F_i^A\ox F_j^B,
\end{equation}
with a real matrix $\bH=(h_{ij})$.

\begin{lemma}\label{lem:l1bound}
For every $H$ of the form \eqref{eq:Hform},
\begin{equation}
    \|H\|_{\Lip} \le 2\sum_{i,j} |h_{ij}| = 2\|\bH\|_{1,1},
\end{equation}
where $\|\bH\|_{1,1} = \sum_{i,j} |h_{ij}|$.
\end{lemma}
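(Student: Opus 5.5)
The bound follows from the triangle inequality for commutators together with the elementary norm estimate $\|[X,Y]\|_{op}\le 2\|X\|_{op}\|Y\|_{op}$. I would first reduce the supremum in \eqref{eq:lipdef} to a single term, then bound each term.

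Fix $A\in\B(\Hil_A)$ with $\|A\|_{op}\le1$. By linearity of the commutator,
\begin{equation*}
[A\ox I_B, H] = \sum_{i,j} h_{ij}\,[A\ox I_B, F_i^A\ox F_j^B] = \sum_{i,j} h_{ij}\,[A,F_i^A]\ox F_j^B .
\end{equation*}
The triangle inequality and multiplicativity of the operator norm on tensor products then give
\begin{equation*}
\|[A\ox I_B,H]\|_{op}\le \sum_{i,j}|h_{ij}|\,\|[A,F_i^A]\|_{op}\,\|F_j^B\|_{op}.
\end{equation*}
Next I use the normalization \eqref{eq:orth}, namely $\|F_i^A\|_{op}\le1$ and $\|F_j^B\|_{op}\le1$. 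Combined with
\begin{equation*}
\|[A,F_i^A]\|_{op}\le \|AF_i^A\|_{op}+\|F_i^A A\|_{op}\le 2\|A\|_{op}\|F_i^A\|_{op}\le 2,
\end{equation*}
this yields $\|[A\ox I_B,H]\|_{op}\le 2\sum_{i,j}|h_{ij}|$. The bound is uniform in $A$, so it also bounds the first supremum.

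The second supremum is handled the same way. Writing $[I_A\ox B,H]=\sum_{i,j}h_{ij}\,F_i^A\ox[B,F_j^B]$, the identical estimate gives the same bound. Taking the maximum of the two suprema yields $\|H\|_{\Lip}\le 2\|\bH\|_{1,1}$.

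There is no real obstacle here. The only points to check are that the commutator factorizes through the tensor product as written, which holds because $A\ox I_B$ commutes with the $B$-factor, and that the operator norm is multiplicative on simple tensors, $\|X\ox Y\|_{op}=\|X\|_{op}\|Y\|_{op}$. Hermiticity of $H$, which requires $\bH$ to be real, is not needed for the estimate itself.
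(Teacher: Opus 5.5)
Your proposal is correct and follows the paper's proof essentially verbatim. Both expand $[A\ox I_B,H]=\sum_{i,j}h_{ij}[A,F_i^A]\ox F_j^B$, apply the triangle inequality with $\|F_j^B\|_{op}\le1$ and $\|[A,F_i^A]\|_{op}\le2$, and handle the $I_A\ox B$ term symmetrically.
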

\begin{proof}
Taking $A\in\B(\Hil_A)$ with $\|A\|_{op}\le1$, then we have
\[
[A\ox I_B, H] = \sum_{i,j} h_{ij}\,[A,F_i^A]\ox F_j^B .
\]
Using the operator norm and the fact $\|F_j^B\|_{op}\le1$, we bound
\[
\|[A\ox I_B, H]\|_{op} \le \sum_{i,j} |h_{ij}|\, \|[A,F_i^A]\|_{op} \le 2\sum_{i,j} |h_{ij}|,
\]
because $\|[A,F_i^A]\|_{op}\le 2\|A\|_{op}\le 2$. The same estimate holds for $I_A\ox B$. Hence we have $\|H\|_{\Lip}\le 2\|\bH\|_{1,1}$.
\end{proof}
The following lemma is unchanged; its proof relies only on the orthonormality of the bases.
\begin{lemma}\label{lem:alpha}
For $H$ of the form \eqref{eq:Hform}, we define
\begin{equation}
    c_A = \sqrt{1-\frac{1}{d_A}},\qquad
    c_B = \sqrt{1-\frac{1}{d_B}}.
\end{equation}
Then we have
\begin{equation}
    \a(H) = \max_{\s\in\SEP}\tr[H\s] = c_A\,c_B\,\|\bH\|_{op}.
\end{equation}
\end{lemma}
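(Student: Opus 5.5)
The plan is to reduce the maximization over $\SEP$ to pure product states and then turn the problem into a bilinear form on Bloch vectors, whose Cauchy--Schwarz bound gives $c_Ac_B\|\bH\|_{op}$.

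\emph{Reduction to pure product states.} The map $\s\mapsto\tr[H\s]$ is linear and $\SEP$ is the convex hull of pure product states $|a\rangle\langle a|\ox|b\rangle\langle b|$. Hence the maximum in \eqref{ah} is attained at such an extreme point, and $\a(H)=\max_{a,b}\tr[H(|a\rangle\langle a|\ox|b\rangle\langle b|)]$.

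\emph{Bloch-vector form.} For a pure state $\s_A=|a\rangle\langle a|$ I set $x_i=\tr[\s_A F_i^A]$, and likewise $y_j=\tr[\s_B F_j^B]$. Since $H$ in \eqref{eq:Hform} has no local or identity part, $\tr[H(\s_A\ox\s_B)]=\sum_{i,j}h_{ij}x_iy_j=x^{T}\bH y$. Orthonormality \eqref{eq:orth} gives $\s_A=\frac{I}{d_A}+\sum_i x_iF_i^A$. Purity then gives $1=\tr\s_A^2=\frac1{d_A}+\|x\|_2^2$, so $\|x\|_2=c_A$, and similarly $\|y\|_2=c_B$. This step is a direct consequence of orthonormality, which is the sense of the remark that the lemma relies only on that.

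\emph{Upper bound.} Cauchy--Schwarz together with the definition of the operator norm gives $x^T\bH y\le\|x\|_2\,\|\bH\|_{op}\,\|y\|_2=c_Ac_B\|\bH\|_{op}$. Taking the maximum yields $\a(H)\le c_Ac_B\|\bH\|_{op}$.

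\emph{Attainment.} Let $u,v$ be a top singular pair of $\bH$. Equality requires pure states whose Bloch vectors are $c_Au$ and $c_Bv$. Equivalently, the operators $P_A=\frac{I}{d_A}+c_A\sum_iu_iF_i^A$ and $P_B=\frac{I}{d_B}+c_B\sum_jv_jF_j^B$ must be rank-one projectors; they already have unit trace and unit Hilbert--Schmidt norm. I would then insert $\s_A=P_A$, $\s_B=P_B$ and read off $\tr[H(\s_A\ox\s_B)]=c_Ac_B\,u^T\bH v=c_Ac_B\|\bH\|_{op}$.

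For qubits this works immediately. There every unit vector in $\bbR^3$ is a pure-state direction, because $\frac I2+\frac1{\sqrt2}\,u\cdot\vec\s/\sqrt2$ has eigenvalues $0$ and $1$.

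\emph{Main obstacle.} The hard part is the attainment step for $d_A\ge3$ or $d_B\ge3$. There the pure-state Bloch vectors form only a proper subset of the sphere of radius $c_A$, so positivity of $P_A$ is not automatic. My first attempt would use the freedom in choosing the singular pair. If the top singular value is degenerate, I would search within the top singular subspaces. I would also use local-unitary covariance: by Proposition~\ref{prop:basic}(iii) and the basis-change argument used there, replacing $H$ by $(U_A\ox U_B)H(U_A\ox U_B)^\dg$ acts on $\bH$ by orthogonal matrices and preserves $\|\bH\|_{op}$. The aim would be to rotate $u,v$ onto vectors of rank-one projectors, for example onto the Bloch vector of $|1\rangle\langle1|$. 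The tool I would try is the spectral condition that $\sum_iu_iF_i^A$ has spectrum $\{\frac{d_A-1}{d_A c_A},-\frac{1}{d_Ac_A},\dots\}$, i.e.\ one large eigenvalue and $d_A-1$ equal small ones.

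If this condition cannot be arranged for a given $\bH$, the argument still yields the inequality $\a(H)\le c_Ac_B\|\bH\|_{op}$ in full generality. Equality holds exactly when $u$ and $v$ satisfy the spectral condition, in particular always for two qubits. That is the form needed for the lower bounds in \eqref{eq:emax}, since there only the upper estimate on $\a(H)$ enters.
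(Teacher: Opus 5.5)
Your upper bound is the paper's own argument. The paper works directly with mixed product states and uses $\tr\s_A^2\le1$ to get $\|u\|_2\le c_A$, rather than first passing to pure product states; the two routes are equivalent. Either way you correctly obtain $\a(H)\le c_Ac_B\|\bH\|_{op}$.

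The paper then asserts attainment by choosing product states ``whose Bloch vectors are proportional to the left and right singular vectors of $\bH$''. That is exactly the step you flag, and your doubt is justified. It is not a technical obstacle but a false claim as soon as a local dimension is at least $3$: a direction that is not a pure-state direction can only be rescaled to a valid Bloch vector of length strictly less than $c_A$.

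Concretely, take $d_A=d_B=3$ and $G=\frac{1}{\sqrt2}\diag(1,-1,0)$, which is traceless with $\tr G^2=1$.
\begin{itemize}
\item Expanding $G=\sum_iu_iF_i^A=\sum_jv_jF_j^B$ gives unit vectors $u,v$.
\item With $\bH=uv^{\mathsf T}$ one has $H=G\ox G$ and $\|\bH\|_{op}=1$.
\item For every product state, $\tr[H(\s_A\ox\s_B)]=\tr[G\s_A]\,\tr[G\s_B]\le\|G\|_{op}^2=\tfrac12$, with equality at $\s_A=\s_B=|e_1\rangle\langle e_1|$.
\item By linearity, $\a(H)=\tfrac12<\tfrac23=c_Ac_B\|\bH\|_{op}$.
\end{itemize}
Replacing the $A$-factor by a unit traceless qubit observable (and padding $G$ with zeros if $d_B>3$) shows that equality also fails for $d_A=2$, $d_B\ge3$.

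Your proposed repair via local unitaries cannot succeed. The operator $U_A\bigl(\sum_iu_iF_i^A\bigr)U_A^\dg$ has the same spectrum as $\sum_iu_iF_i^A$, so the spectral condition you correctly identified is invariant under exactly the rotations you intend to use. The pure-state directions form a single adjoint orbit of real dimension $2(d-1)$ inside a sphere of dimension $d^2-2$, and these coincide only for $d=2$.

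You should therefore drop the hedging in your last paragraph and state the result as it actually is. In general $\a(H)\le c_Ac_B\|\bH\|_{op}$. For $\bH\neq0$, equality holds if and only if some top singular pair $(u,v)$ of $\bH$ consists of pure-state Bloch directions, hence always when $d_A=d_B=2$.

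As you observe, only this inequality is used in Theorems~\ref{th:mixedgen} and~\ref{th:twoqubit}, so those bounds survive. What must be weakened is the lemma's claimed equality and the paper's attainment sentence.
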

\begin{proof}
Any product state can be written as $\s_A = I/d_A + \sum_i u_i F_i^A$ and
$\s_B = I/d_B + \sum_j v_j F_j^B$ with real vectors $u=(u_i)$, $v=(v_j)$.
Since $\{F_i^A\}$ and $\{F_j^B\}$ are orthonormal and traceless,
\[
\tr[(\s_A)^2] = \frac1{d_A} + \|u\|_2^2 \le 1,\qquad
\tr[(\s_B)^2] = \frac1{d_B} + \|v\|_2^2 \le 1,
\] where $\|u\|_2 = \sqrt{\sum_i u_i^2}$,
hence $\|u\|_2 \le c_A$ and $\|v\|_2 \le c_B$.
For $H = \sum_{i,j} h_{ij} F_i^A\otimes F_j^B$, we have
\begin{align}
   \tr[H\s] = \sum_{i,j} h_{ij} u_i v_j = u^{\mathsf T}\bH v\nonumber\\
\le \|\bH\|_{op}\,\|u\|_2\,\|v\|_2 \le c_A c_B \|\bH\|_{op}. 
\end{align}
The bound is attained by choosing product states whose Bloch vectors are proportional to the left and right singular vectors of $\bH$.
\end{proof}

Using the dual formulation together with Lemma~\ref{lem:l1bound} we obtain a first general lower bound in terms of the entrywise norm and the rank of the correlation matrix.

\begin{theorem}\label{th:mixedgen}
Let $\r$ have correlation matrix $R$, and denote $r = \rank(R)$, $n_A = d_A^2-1$, $n_B = d_B^2-1$. Then we have
\begin{equation}
    \cE(\rho) \ge \frac{1}{2\sqrt{r\, n_A n_B}} \bigl( \|R\|_{\tr} - c_A c_B \bigr),
\end{equation}
where $\|R\|_{\mathrm{Tr}} = \operatorname{Tr}\left(\sqrt{R^\dagger R}\right).$
\end{theorem}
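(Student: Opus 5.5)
We will use the dual lower bound \eqref{eq:emax} with a test observable of the form \eqref{eq:Hform} built from the singular value decomposition of the correlation matrix. Write $R=\sum_{k=1}^{r}\sigma_k\,a_k b_k^{\mathsf T}$ with orthonormal real vectors $a_k\in\bbR^{n_A}$, $b_k\in\bbR^{n_B}$ and $\sigma_k>0$. The matrix $R$ is real because $F_i^A\ox F_j^B$ is Hermitian. Set $\bH_0=\sum_{k=1}^r a_k b_k^{\mathsf T}$, the partial isometry in the polar decomposition of $R$. Then we take $H=\lambda\sum_{i,j}(\bH_0)_{ij}F_i^A\ox F_j^B$, where the scaling $\lambda>0$ will be fixed so that $\|H\|_{\Lip}\le 1$.

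The first step evaluates the expectation value. The expansion \eqref{eq:bloch} together with orthonormality and tracelessness of the local bases gives $\tr[(F_i^A\ox F_j^B)\r]=r_{ij}$, consistent with \eqref{eq:coeff}. Hence
\[
\tr[H\r]=\lambda\sum_{i,j}(\bH_0)_{ij}r_{ij}=\lambda\,\tr(\bH_0^{\mathsf T}R)=\lambda\sum_k\sigma_k=\lambda\|R\|_{\tr}.
\]
The second step uses Lemma~\ref{lem:alpha}. Since $\|\bH_0\|_{op}=1$, we get $\a(H)=\lambda\,c_Ac_B$. The third step controls the Lipschitz norm through Lemma~\ref{lem:l1bound}: $\|H\|_{\Lip}\le 2\lambda\|\bH_0\|_{1,1}$. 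To bound the entrywise norm we use Cauchy--Schwarz over the $n_An_B$ entries:
\[
\|\bH_0\|_{1,1}\le\sqrt{n_An_B}\,\|\bH_0\|_{F}=\sqrt{n_An_B}\,\sqrt{r},
\]
because $\bH_0$ has exactly $r$ singular values equal to $1$. Choosing $\lambda=1/(2\sqrt{r\,n_An_B})$ then guarantees $\|H\|_{\Lip}\le1$. Inserting these values into \eqref{eq:emax} yields $\cE(\r)\ge\lambda(\|R\|_{\tr}-c_Ac_B)$, which is the claim.

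The argument assumes $r\ge1$. If $R=0$, the right-hand side is negative and the bound holds trivially since $\cE\ge0$.

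The main obstacle is the step from the entrywise $\ell_1$ norm to the rank. The choice of $\bH_0$ as the polar isometry rather than $R/\|R\|_{op}$ is what makes the Frobenius norm exactly $\sqrt r$ while keeping $\|\bH_0\|_{op}=1$, so that $\a(H)$ stays independent of the spectrum of $R$. If a sharper dimension dependence were desired, one would need a better estimate of $\|\cdot\|_{\Lip}$ than Lemma~\ref{lem:l1bound}. For the stated bound, however, Cauchy--Schwarz is sufficient.
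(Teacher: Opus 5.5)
Your proposal is correct and follows the paper's proof essentially step for step. The paper also takes the partial isometry $S=U\Sigma'V^{\mathsf T}$ from the singular value decomposition of $R$ and uses Lemma~\ref{lem:alpha} for the separable term. It bounds $\|S\|_{\Lip}\le 2\|S\|_{1,1}\le 2\sqrt{r\,n_An_B}$ via Lemma~\ref{lem:l1bound} and Cauchy--Schwarz, and rescales by $t=1/(2\sqrt{r\,n_An_B})$, exactly as you do.

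One small refinement: only the upper bound $\a(H)\le c_Ac_B\|\bH\|_{op}$ is needed, and it holds in all dimensions. So write $\a(H)\le\lambda c_Ac_B$ rather than an equality, since the bound is in general not attained once $d_A$ or $d_B$ exceeds $2$.
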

\begin{proof}
From \eqref{eq:emax} and Lemma~\ref{lem:alpha}, we have
\[
    \cE(\rho) \ge \max_{H:\|H\|_{\Lip}\le 1} \bigl( \Tr[H\rho] - c_A c_B \|\bH\|_{op} \bigr).
\]
We first restrict to $H$ of the form \eqref{eq:Hform}. We let $R = U\Sigma V^{\mathsf T}$ be a singular value decomposition, $\Sigma = \diag(\sigma_1,\dots,\sigma_r,0,\dots,0)$. We define $\Sigma' = \diag(\operatorname{sgn}(\sigma_1),\dots,\operatorname{sgn}(\sigma_r),0,\dots,0)$ and $S = U\Sigma' V^{\mathsf T}$. Then we have $\|S\|_{op}=1$ and $\Tr(S^{\mathsf T} R) = \|R\|_{\tr}$.By Lemma~\ref{lem:l1bound}, $\|S\|_{\Lip} \le 2\|S\|_{1,1}$. Moreover,
\[
    \|S\|_{1,1} = \sum_{i,j} |S_{ij}| \le \sqrt{n_A n_B} \, \|S\|_{\mathrm{F}} = \sqrt{n_A n_B \, r}, 
\]where$\|S\|_F = \sqrt{\text{Tr}(S^\dagger S)} = \sqrt{\sum_{i,j} |S_{ij}|^2}.$
Therefore we have $\|S\|_{\Lip} \le 2\sqrt{r n_A n_B}$. We can choose $t = 1/(2\sqrt{r n_A n_B})$ and set $H = t S$. Then we get $\|H\|_{\Lip}\le 1$ and $\|\mathbf{H}\|_{op}=t$. Substituting them into the dual bound, we get
\[
    \cE(\rho) \ge t \,\Tr(S^{\mathsf T} R) - c_A c_B\,t = t\bigl( \|R\|_{\tr} - c_A c_B \bigr).
\]
Inserting $t$ finishes the proof.
\end{proof}

In the important case of two qubits, a more refined estimate on the Lipschitz constant can be obtained, recovering a stronger constant.

\begin{theorem}\label{th:twoqubit}
For a two‑qubit system ($d_A=d_B=2$) we choose the Pauli basis $F_i = \sigma_i/\sqrt2$, $i=1,2,3$. Then for any state $\rho$, we have
\begin{equation}
    \cE(\rho) \ge \frac13\Bigl( \|R\|_{\tr} - \frac12 \Bigr),
\end{equation}
where $R_{ij} = \frac12\Tr[\rho(\sigma_i\otimes\sigma_j)]$ is the usual correlation matrix.
\end{theorem}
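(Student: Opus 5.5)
The plan is to rerun the argument of Theorem~\ref{th:mixedgen} with the same test observable, replacing the crude estimate of Lemma~\ref{lem:l1bound} by a sharper Lipschitz bound that uses the qubit structure. Write the singular value decomposition $R=U\Sigma V^{\mathsf T}$ with $U,V$ real orthogonal $3\times3$ matrices, and set $S=U\Sigma'V^{\mathsf T}$, where $\Sigma'$ has entry $1$ exactly where $\sigma_k>0$. Then $\|S\|_{op}=1$ and $\Tr(S^{\mathsf T}R)=\|R\|_{\tr}$. The test observable will be
\[
H=\frac13\sum_{i,j}S_{ij}F_i\ox F_j=\frac16\sum_{i,j}S_{ij}\,\sigma_i\ox\sigma_j .
\]
For qubits, Lemma~\ref{lem:alpha} gives $c_Ac_B=\tfrac12$, so $\a(H)=\tfrac13\cdot\tfrac12\|S\|_{op}=\tfrac16$. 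By \eqref{eq:coeff} (with $r_{ij}=R_{ij}$), $\Tr[H\r]=\tfrac13\Tr(S^{\mathsf T}R)=\tfrac13\|R\|_{\tr}$. Once $\|H\|_{\Lip}\le1$ is shown, \eqref{eq:emax} yields $\cE(\r)\ge\frac13\bigl(\|R\|_{\tr}-\frac12\bigr)$.

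The key step, and the only real obstacle, is proving $\|S\|_{\Lip}\le 3$ in the normalization $\sum_{i,j}S_{ij}F_i\ox F_j$, i.e.\ $\|H\|_{\Lip}\le1$. Summing over all nine entries as in Lemma~\ref{lem:l1bound} would only give $2\|S\|_{1,1}$, which can be as large as $2\cdot 3\sqrt3$. The fix is to group the terms by the $B$-index. For a fixed column $j$, let $a^{(j)}=Se_j\in\bbR^3$ and write
\[
H=\frac16\sum_j (a^{(j)}\cdot\vec\sigma)\ox\sigma_j .
\]
For $\|A\|_{op}\le1$ this gives $[A\ox I,H]=\frac16\sum_j[A,a^{(j)}\cdot\vec\sigma]\ox\sigma_j$.

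Next, I will bound each term. The Hermitian operator $a\cdot\vec\sigma$ has operator norm $|a|$, so $\|[A,a\cdot\vec\sigma]\|_{op}\le2|a|$, and $\|\sigma_j\|_{op}=1$. The triangle inequality then gives
\[
\|[A\ox I,H]\|_{op}\le\frac16\sum_j 2|a^{(j)}|=\frac13\sum_j|Se_j| .
\]
Since $\|S\|_{op}=1$, every column satisfies $|Se_j|\le1$, so the sum is at most $3$ and the commutator norm is at most $1$. The same argument, grouping by rows (using the rows of $S$, equivalently the columns of $S^{\mathsf T}$, which also have $\|S^{\mathsf T}\|_{op}=1$), handles $I\ox B$. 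Hence $\|H\|_{\Lip}\le1$.

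Finally, substituting into \eqref{eq:emax} gives $\cE(\r)\ge\Tr[H\r]-\a(H)=\frac13\|R\|_{\tr}-\frac16$, which is the claim. I expect no difficulty beyond the column-grouping estimate. If the constant $\tfrac13$ were meant to be sharper, one would need a better bound than the triangle inequality over $j$, but the statement only requires $3$.
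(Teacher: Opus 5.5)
Your proof is correct. It agrees with the paper's on the test matrix $\bH=S/3$, on the value $\a(H)=\tfrac16$ from Lemma~\ref{lem:alpha}, and on the use of \eqref{eq:emax}. The only difference is how you prove the key estimate $\|H\|_{\Lip}\le 3\|\bH\|_{op}$.

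The paper takes the singular value decomposition of the test matrix, $\bH=U\Sigma V^{\mathsf T}$. It rewrites $H=\frac12\sum_k s_k A_k\ox B_k$ with $A_k=\sum_i U_{ik}\sigma_i$ and $B_k=\sum_j V_{jk}\sigma_j$, both of unit operator norm. One triangle inequality then controls both $[C\ox I,H]$ and $[I\ox D,H]$, giving $\|H\|_{\Lip}\le\sum_k s_k=\|\bH\|_{\tr}\le 3\|\bH\|_{op}$.

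You instead keep the Pauli basis on one factor. You group by columns for the $A$-side commutator and by rows for the $B$-side one, which gives $\|H\|_{\Lip}\le\max\{\sum_j|\bH e_j|,\sum_i|\bH^{\mathsf T}e_i|\}$. You then bound each column and row by $\|\bH\|_{op}$.

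\begin{itemize}
\item Your route needs no diagonalization of $\bH$ and is marginally more elementary, but it requires two separate groupings.
\item The paper's intermediate quantity is basis-independent and symmetric in the two parties, and it is never larger than yours. For every $W$ with $\|W\|_{op}\le1$ one has $\Tr(W^{\mathsf T}\bH)=\sum_j (We_j)\cdot(\bH e_j)\le\sum_j|\bH e_j|$. Maximizing the left side over such $W$ gives $\|\bH\|_{\tr}$, so $\|\bH\|_{\tr}\le\sum_j|\bH e_j|$, and likewise for rows.
\end{itemize}

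For the theorem as stated this makes no difference, since only the constant $3$ is needed. Moreover, when $R$ has full rank, $S$ is orthogonal, so all its rows and columns are unit vectors and both estimates equal exactly $3$.
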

\begin{proof}
We first recall $H = \frac12\sum_{i,j} h_{ij}\sigma_i\otimes\sigma_j$. Writing $\bH = (h_{ij})$ and taking its singular value decomposition $\bH = U\Sigma V^{\mathsf T}$ with $\Sigma = \diag(s_1,s_2,s_3)$, $s_1\ge s_2\ge s_3\ge0$. Then we define
\[
A_k = \sum_{i=1}^3 U_{ik}\sigma_i,\qquad B_k = \sum_{j=1}^3 V_{jk}\sigma_j .
\]
Because $U,V$ are orthogonal and $\{\sigma_i\}$ satisfy $\Tr(\sigma_i\sigma_j)=2\delta_{ij}$, one verifies $\|A_k\|_{op}=\|B_k\|_{op}=1$. Moreover, we have
\[
H = \frac12\sum_{k=1}^3 s_k \, A_k\otimes B_k .
\]Now for any $C$ with $\|C\|_{op}\le1$, we have
\begin{align}
    \| [C\otimes I, H] \|_{op} \le \frac12\sum_{k=1}^3 s_k\,\|[C,A_k]\|_{op}\,\|B_k\|_{op} \nonumber\\\le \frac12\sum_{k=1}^3 s_k\cdot 2\|C\|_{op}\|A_k\|_{op} = \sum_{k=1}^3 s_k .
\end{align}

The same estimate holds for $I\otimes D$. Since $\sum_{k=1}^3 s_k \le 3 s_1 = 3\|\bH\|_{op}$, we obtain $\|H\|_{\Lip} \le 3\|\bH\|_{op}$.
Now we apply the dual formulation \eqref{eq:emax} together with Lemma~\ref{lem:alpha} (note that for the Pauli basis $c_A=c_B=1/\sqrt2$, hence $c_Ac_B=1/2$), then we have
\begin{align}
        \cE(\rho) \ge \max_{\|\bH\|_{op}\le 1/3} \bigl( \Tr(\bH^{\mathsf T} R) - \tfrac12 \|\bH\|_{op} \bigr)\nonumber\\
    \ge \frac13\|R\|_{\tr} - \frac16 = \frac13\bigl( \|R\|_{\tr} - \tfrac12 \bigr).
\end{align}
\end{proof}

\section{Further cross-disciplinary bounds and open problems}
\label{sec:cross}
The variational lower bound \eqref{eq:emax} is the starting point for several further results that are collected in this section. We first recall the inequality in Theorem~\ref{th:dual}. Subsection~\ref{sec:witness} translates this bound into the language of entanglement witnesses: any witness with a bounded Lipschitz constant gives a quantitative lower bound on $\mathcal{E}(\rho)$, and the optimal witness exactly reproduces the dual variational problem. Subsection~\ref{sec:math} derives additional metric properties that follow from the same geometric structure---subadditivity under tensor products, a comparison with the trace distance, a bound on expectation values of local observables, and an estimate for Haar-random pure states. We close with a conjecture on the asymptotic behaviour of the overlap between tensor powers of a state and the separable set, which points to a quantum Sanov theorem with $\mathcal{E}$ as the rate function.

We first recall the fundamental dual inequality that has been used throughout.
\begin{theorem}\label{th:dual}
For any state $\rho$, we have
\begin{equation}
    \cE(\rho) \;\ge\; \sup_{H:\|H\|_{\Lip}\le 1}
    \Bigl( \tr[H\rho] - \max_{\sigma\in\SEP}\tr[H\sigma] \Bigr).
\end{equation}
\end{theorem}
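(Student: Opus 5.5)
The bound reduces to the weak-duality inequality \eqref{eq:W1dual}. No minimax exchange is needed: for a lower bound on an infimum of a supremum, $\inf_\sigma \sup_H \ge \sup_H \inf_\sigma$ holds trivially.

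The first step is pointwise. Fix any self-adjoint $H$ with $\|H\|_{\Lip}\le1$. Every element of $\cK_0$ has the form $A\ox I_B$ or $I_A\ox B$ with operator norm at most one, by \eqref{eq:orth}. Hence the definition \eqref{eq:lipdef} gives $\|[X,H]\|_{op}\le1$ for all $X\in\cK_0$, so $H$ is feasible in the dual representation \eqref{eq:W1dualexact} of De Palma \emph{et al.}, which we take as given. It follows that for every state $\sigma$,
\[
W_1(\rho,\sigma)\ \ge\ \bigl|\tr[H(\rho-\sigma)]\bigr|\ \ge\ \tr[H\rho]-\tr[H\sigma].
\]

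The second step takes the infimum over separable states. For $\sigma\in\SEP$ we have $\tr[H\sigma]\le\a(H)=\max_{\sigma'\in\SEP}\tr[H\sigma']$. This maximum is attained because $\SEP$ is compact and $\sigma\mapsto\tr[H\sigma]$ is continuous. Therefore
\[
W_1(\rho,\sigma)\ \ge\ \tr[H\rho]-\a(H)\qquad\text{for all }\sigma\in\SEP.
\]
The right-hand side does not depend on $\sigma$, so taking the infimum over $\sigma\in\SEP$ in Definition~\ref{def:ent} gives $\cE(\rho)\ge\tr[H\rho]-\a(H)$.

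The third step takes the supremum over $H$. The inequality holds for every admissible $H$, so it survives the supremum over the set $\{H:\|H\|_{\Lip}\le1\}$, and this yields the statement. The supremum is finite because $|\tr[H\rho]-\a(H)|$ is bounded: writing $H=H_0+cI$ with $H_0$ traceless, the identity part cancels in the difference, and $\|H_0\|_{op}$ is controlled by the Lipschitz seminorm in finite dimension. Finiteness is in any case not needed for the inequality itself.

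The only delicate point is the validity of \eqref{eq:W1dualexact} itself, in particular the normalization of $W_1$ (the paper defines it via a square root of the transport cost) and whether the constraint is $\|[X,H]\|\le1$ or $\le2$. I would state the proof conditionally on \eqref{eq:W1dualexact} as quoted from \cite{DePalma2021}. If only weak duality $W_1(\rho,\sigma)\ge\tr[H(\rho-\sigma)]$ is available, that is exactly the direction used above, so the argument is unaffected.
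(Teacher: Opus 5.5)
Your proof is correct and follows the same route as the paper: weak duality from the cited dual representation, then the infimum over $\SEP$, then the supremum over admissible $H$. Your remark that no minimax exchange is needed is also right. The paper's appeal to ``compactness and convexity'' is superfluous, since only the trivial inequality $\inf_\sigma\sup_H\ge\sup_H\inf_\sigma$ is used (equivalently, your argument pointwise in $H$), and both arguments rest equally on the quoted weak duality $W_1(\rho,\sigma)\ge\tr[H(\rho-\sigma)]$.
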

\begin{proof}
The original quantum Wasserstein distance satisfies (see \cite{DePalma2021})
$$
W_1(\rho,\sigma) \ge \max_{H:\|H\|_{\Lip}\le 1} \bigl(\tr[H\rho]-\tr[H\sigma]\bigr).
$$
Taking the infimum over $\sigma\in\SEP$ and exchanging $\inf$ and $\max$ (justified by compactness and convexity) yields
\begin{align}
    \cE(\rho) \ge \sup_{H:\|H\|_{\Lip}\le 1}\inf_{\sigma\in\SEP}\bigl(\tr[H\rho]-\tr[H\sigma]\bigr)\nonumber\\
= \sup_{H:\|H\|_{\Lip}\le 1} \Bigl( \tr[H\rho] - \max_{\sigma\in\SEP}\tr[H\sigma] \Bigr).
\end{align}The inequality is generally strict; an equality would require a stronger Lipschitz condition matching exactly the original dual of $W_1$.
\end{proof}
\subsection{Quantitative link to entanglement witnesses}
\label{sec:witness}
A widely used tool for detecting entanglement in the laboratory are \emph{entanglement witnesses}~\cite{Chruscinski2014,Guhne2009}.
We now show that the Wasserstein entanglement measure $E(\rho)$ provides a direct quantitative upgrade of any witness:
the amount by which a state violates a witness gives a concrete lower bound on its geometric entanglement.
\begin{theorem}
\label{thm:witness}
Let $W$ be a self‑adjoint operator on $\mathcal{H}_{AB}$ which is an entanglement witness, i.e.\
\begin{equation}
    \Tr[W\sigma] \ge 0 \qquad \forall\, \sigma \in \SEP,
\end{equation}
and assume that the witness is normalised so that $\min_{\sigma\in\SEP} \Tr[W\sigma]=0$
(this can always be achieved by adding a suitable multiple of the identity).
If the Lipschitz seminorm of $W$ satisfies $\|W\|_{\mathrm{Lip}} \le L$ for some $L>0$,
then for every bipartite state $\rho$,
\begin{equation}
    \cE(\rho) \;\ge\; \frac{-\Tr[W\rho]}{L}.
    \label{eq:witnessbound}
\end{equation}In particular, if an experiment measures $\Tr[W\rho] = -\varepsilon < 0$, it immediately follows that
$\cE(\rho) \ge \varepsilon/L$.
\end{theorem}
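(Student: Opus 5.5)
The plan is to apply the dual inequality of Theorem~\ref{th:dual} to a rescaled, sign-flipped copy of the witness. Since $\|\cdot\|_{\Lip}$ is a seminorm built from commutators, it is absolutely homogeneous: $\|cW\|_{\Lip}=|c|\,\|W\|_{\Lip}$ for real $c$. It is also blind to multiples of the identity, because $[X,I]=0$. So I would set
\[
H=-\frac{1}{L}W .
\]
Then $\|H\|_{\Lip}=\|W\|_{\Lip}/L\le1$, which makes $H$ an admissible test observable in the supremum of Theorem~\ref{th:dual}. The remark that the normalisation can be reached by adding a multiple of the identity is consistent with this, since such a shift does not change $\|W\|_{\Lip}$.

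Next I would compute the two terms appearing in Theorem~\ref{th:dual}. The first is $\Tr[H\rho]=-\Tr[W\rho]/L$. For the second, the support functional \eqref{ah} gives
\[
\a(H)=\max_{\sigma\in\SEP}\Tr[H\sigma]=-\frac1L\min_{\sigma\in\SEP}\Tr[W\sigma]=0,
\]
using the normalisation hypothesis. The maximum and minimum are attained because $\SEP$ is compact and the map $\sigma\mapsto\Tr[W\sigma]$ is linear. Substituting both terms into Theorem~\ref{th:dual} yields
\[
\cE(\rho)\ge \Tr[H\rho]-\a(H)=-\Tr[W\rho]/L,
\]
which is \eqref{eq:witnessbound}. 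The experimental statement follows by substituting $\Tr[W\rho]=-\varepsilon$.

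No step is a real obstacle; the argument only needs two minor checks. The first is the sign convention: passing from the minimum of $W$ over $\SEP$ to the maximum of $-W/L$ requires care. The second is that, if $\Tr[W\rho]\ge0$, the bound is still correct but trivial, since $\cE\ge0$. I would also note that the normalisation is used only through $\min_{\SEP}\Tr[W\sigma]=0$; the inequality $\Tr[W\sigma]\ge0$ alone, without the minimum being exactly zero, would still give the bound $\cE(\rho)\ge(-\Tr[W\rho]+\min_{\SEP}\Tr[W\sigma])/L$. The bound inherits any looseness of Theorem~\ref{th:dual}, since it only lower-bounds $\cE$ through the dual.
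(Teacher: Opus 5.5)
Your proposal is correct and follows the paper's proof essentially verbatim. You choose $H=-W/L$, use homogeneity of $\|\cdot\|_{\Lip}$ for admissibility, get $\alpha(H)=0$ from the normalisation, and substitute into the dual bound \eqref{eq:emax} (equivalently Theorem~\ref{th:dual}); your remark on the unnormalised case is a correct minor generalisation.
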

\begin{proof}
In the dual lower bound~\eqref{eq:emax} choose the trial observable $H = -W/L$.
Because $\|H\|_{\mathrm{Lip}} = \|W\|_{\mathrm{Lip}}/L \le 1$, it is admissible.
The support functional \eqref{ah} gives
\[
\alpha(H) = \max_{\sigma\in\SEP} \Tr[H\sigma]
           = -\frac{1}{L}\,\min_{\sigma\in\SEP} \Tr[W\sigma]
           = 0,
\]
where the last equality uses the witness property and the normalisation condition.
Inserting $H$ and $\alpha(H)$ into the dual bound yields
\[
\cE(\rho) \;\ge\; \Tr[H\rho] - \alpha(H) \;=\; -\frac{1}{L}\,\Tr[W\rho],
\]
which is exactly \eqref{eq:witnessbound}.
\end{proof}
The constant $L$ can be estimated using the techniques of Lemma~\ref{lem:l1bound} and Theorem~\ref{th:twoqubit}.
For instance, in the two‑qubit setting with the Pauli basis $\sigma_i/\sqrt{2}$,
one often uses the optimal entanglement witness
$W = \frac14\bigl(I\otimes I - \sum_{i=1}^3 \sigma_i\otimes\sigma_i\bigr)$,
for which a straightforward application of Lemma~\ref{lem:l1bound} gives $L \le 3/2$;
thus any measured negative expectation value $\langle W\rangle_\rho = -\varepsilon$
immediately implies $\cE(\rho) \ge \frac{2}{3}\,\varepsilon$.

\begin{proposition}
\label{prop:witness-variational}
For any bipartite state $\rho$, we define the witness-optimised bound
\begin{align}
        \mathcal{W}(\rho) \;=\; \sup\big\{
        -\Tr[W\rho]
        \;:\;
        W = W^\dagger,\;
        \Tr[W\sigma]\ge 0 \;\; \nonumber\\\forall\,\sigma\in\SEP,\;
        \min_{\sigma\in\SEP} \Tr[W\sigma]=0,\;
        \|W\|_{\mathrm{Lip}}\le 1
    \big\}.
    \label{eq:Wdef}
\end{align}

Then $\mathcal{W}(\rho)$ coincides exactly with the dual lower bound~\eqref{eq:emax},
\begin{equation}
    \mathcal{W}(\rho) \;=\; \max_{\|H\|_{\mathrm{Lip}}\le 1}\bigl( \Tr[H\rho] - \alpha(H) \bigr),
    \label{eq:Wdual}
\end{equation}
and consequently provides a lower bound to the Wasserstein entanglement measure
\begin{equation}
    \cE(\rho) \;\ge\; \mathcal{W}(\rho).
    \label{eq:Wbound}
\end{equation}
\end{proposition}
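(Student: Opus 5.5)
Two inequalities give \eqref{eq:Wdual}: every admissible witness yields an admissible $H$, and every admissible $H$ yields an admissible witness with the same objective value. The key identity is $\alpha(H+cI)=\alpha(H)+c$ for all $c\in\bbR$, since every $\sigma\in\SEP$ has unit trace. Another key fact is that $\|\cdot\|_{\Lip}$ is unchanged when a multiple of the identity is added, because $[A\ox I_B,cI]=[I_A\ox B,cI]=0$. In what follows, the right-hand side of \eqref{eq:Wdual} is written $D(\rho)$.

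\emph{Step 1 ($\mathcal{W}\le D$).} Let $W$ be feasible in \eqref{eq:Wdef}, and set $H=-W$. Then $\|H\|_{\Lip}=\|W\|_{\Lip}\le1$, and
\[
\alpha(H)=-\min_{\sigma\in\SEP}\Tr[W\sigma]=0 .
\]
Hence $\Tr[H\rho]-\alpha(H)=-\Tr[W\rho]$. This is the same computation as in the proof of Theorem~\ref{thm:witness} with $L=1$. Taking the supremum over $W$ gives $\mathcal{W}(\rho)\le D(\rho)$.

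\emph{Step 2 ($D\le\mathcal{W}$).} Let $H$ satisfy $\|H\|_{\Lip}\le1$, and define $W=\alpha(H)I-H$. This operator is self-adjoint. For every $\sigma\in\SEP$ we have $\Tr[W\sigma]=\alpha(H)-\Tr[H\sigma]\ge0$. The minimum over the compact set $\SEP$ is attained at a maximiser of $\Tr[H\sigma]$ and equals $0$, so $W$ satisfies the normalisation. By the identity-shift invariance, $\|W\|_{\Lip}=\|H\|_{\Lip}\le1$. Finally, $-\Tr[W\rho]=\Tr[H\rho]-\alpha(H)$. So every value in the dual problem is achieved in \eqref{eq:Wdef}, and $D(\rho)\le\mathcal{W}(\rho)$.

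\emph{Step 3 (attainment and the bound).} To justify writing $\max$ rather than $\sup$ in \eqref{eq:Wdual}, observe that the objective is invariant under $H\mapsto H+cI$. We may therefore restrict to $\Tr H=0$. On the traceless subspace $\|\cdot\|_{\Lip}$ is a genuine norm: if $H$ commutes with all $A\ox I_B$ and all $I_A\ox B$, then $H$ commutes with the whole algebra $\B(\Hil_{AB})$ and is a multiple of $I$. In finite dimensions the feasible set is then compact. The map $H\mapsto\alpha(H)$ is a maximum of linear functionals over the compact set $\SEP$, hence continuous, so the maximum is attained. Then \eqref{eq:Wbound} follows by combining \eqref{eq:Wdual} with \eqref{eq:emax} (equivalently Theorem~\ref{th:dual}).

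The only step needing care is the Step~3 claim that $\|\cdot\|_{\Lip}$ has trivial kernel modulo the identity. If that causes trouble, I will state the result with $\sup$ on both sides, which Steps~1 and~2 already prove as an equality.
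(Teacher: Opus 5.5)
Your proposal is correct and is essentially the paper's proof. The two constructions match: $W=\alpha(H)I-H$ gives $D(\rho)\le\mathcal{W}(\rho)$, $H=-W$ gives the reverse, and \eqref{eq:Wbound} is then read off from \eqref{eq:emax}. Your Step~3 is a valid addition the paper skips by simply writing $\max$: restricting to traceless $H$, where $\|\cdot\|_{\Lip}$ is a genuine norm, makes the feasible set compact so the maximum is attained.
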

\begin{proof}
We let $H$ be any self‑adjoint operator with $\|H\|_{\mathrm{Lip}}\le 1$.
Then we define $W = \alpha(H) I - H$.
For every separable state $\sigma$,
\[
\Tr[W\sigma] = \alpha(H) - \Tr[H\sigma] \ge 0,
\]
by definition of $\alpha(H) \equiv \max_{\sigma\in\SEP} \Tr[H\sigma]$,
and the minimum over $\sigma\in\SEP$ is zero (attained at the maximiser).
Moreover, $\|W\|_{\mathrm{Lip}} = \|H\|_{\mathrm{Lip}} \le 1$.
Hence $W$ is admissible in the supremum~\eqref{eq:Wdef}, and we have
\[
-\Tr[W\rho] = \Tr[H\rho] - \alpha(H).
\]
Taking the supremum over all such $H$ gives
\[
\mathcal{W}(\rho) \;\ge\; \max_{\|H\|_{\mathrm{Lip}}\le 1}\bigl( \Tr[H\rho] - \alpha(H) \bigr).
\]
Conversely, we let $W$ be any witness satisfying the conditions in~\eqref{eq:Wdef}.
We set $H = -W$; clearly $\|H\|_{\mathrm{Lip}} \le 1$.
Because $\Tr[W\sigma] \ge 0$ and the minimum is zero, we have
\[
\alpha(H) = \max_{\sigma\in\SEP} \Tr[-W\sigma] = - \min_{\sigma\in\SEP} \Tr[W\sigma] = 0.
\]Therefore, we have
\[
\Tr[H\rho] - \alpha(H) = -\Tr[W\rho]
\;\le\; \max_{\|H\|_{\mathrm{Lip}}\le 1}\bigl( \Tr[H\rho] - \alpha(H) \bigr).
\]
Since this holds for every admissible $W$, the reverse inequality
\[
\mathcal{W}(\rho) \;\le\; \max_{\|H\|_{\mathrm{Lip}}\le 1}\bigl( \Tr[H\rho] - \alpha(H) \bigr)
\]
follows.  The two inequalities prove~\eqref{eq:Wdual}.
The final bound~\eqref{eq:Wbound} is now immediate from the fundamental
inequality $\mathcal{E}(\rho) \ge \max_{\|H\|_{\mathrm{Lip}}\le 1} (\Tr[H\rho] - \alpha(H))$
established in ~\eqref{eq:emax}.
\end{proof}

Proposition~\ref{prop:witness-variational} shows that the dual lower bound
has a transparent operational meaning: it is the largest violation achievable
by any normalised entanglement witness whose Lipschitz constant is at most~$1$.
Because the Lipschitz seminorm can be bounded linearly in terms of the
expansion coefficients of $W$ in the local operator bases $\{F_i^A\}$ and $\{F_j^B\}$
(cf.\ Lemma~\ref{lem:l1bound}), the optimisation~\eqref{eq:Wdef} can be cast as a semi‑definite
program---harnessing standard techniques for entanglement witnesses but now
providing a guaranteed quantitative distance to separability.

\subsection{Further metric properties and random state estimates}
\label{sec:math}

We now turn to the behaviour of $\cE$ under tensor products, a property that directly reflects the metric structure of the underlying quantum Wasserstein distance.
\begin{theorem}\label{th:subadd}
Let $\r_1\in\cD(\Hil_{A_1B_1})$ and
$\r_2\in\cD(\Hil_{A_2B_2})$ be bipartite states.
Then
\[
\cE(\r_1\ox\r_2)\le \cE(\r_1)+\cE(\r_2).
\]
\end{theorem}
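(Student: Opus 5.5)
The plan is to use separable product states as candidates in the infimum defining $\cE(\r_1\ox\r_2)$, together with the triangle inequality and a tensorization property of $W_1$. The bipartition of $\Hil_{A_1A_2}\ox\Hil_{B_1B_2}$ is $A=A_1A_2$ versus $B=B_1B_2$. If $\s_1\in\SEP(\Hil_{A_1B_1})$ and $\s_2\in\SEP(\Hil_{A_2B_2})$, then $\s_1\ox\s_2$ is separable with respect to this cut. Hence
\[
\cE(\r_1\ox\r_2)\le W_1(\r_1\ox\r_2,\s_1\ox\s_2).
\]
By the triangle inequality for $W_1$ (it is a metric, being induced by the dual Lipschitz norm as used in Proposition~\ref{prop:basic}(ii)),
\[
W_1(\r_1\ox\r_2,\s_1\ox\s_2)\le W_1(\r_1\ox\r_2,\s_1\ox\r_2)+W_1(\s_1\ox\r_2,\s_1\ox\s_2).
\]
By compactness of $\SEP$ and the continuity from Proposition~\ref{prop:basic}(v), we may take $\s_1,\s_2$ to be minimisers. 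It then remains to show the tensorization inequality
\[
W_1(\r\ox\omega,\s\ox\omega)\le W_1(\r,\s)
\]
for a fixed state $\omega$ on the second pair of systems, and its mirror version.

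For the tensorization step I would first try data processing. The map $\r\mapsto\r\ox\omega$ is CPTP, an appending channel, but it changes the space, and $W_1$ on the larger space uses a different cost operator. Its set $\cK_0$ consists of $A_1A_2$-local and $B_1B_2$-local generators, so the universal data-processing inequality as stated cannot be applied directly. Instead I would use the dual representation \eqref{eq:W1dualexact}. For a Lipschitz observable $H$ on the big space, define the partial expectation $H_\omega=\Tr_{2}[(I\ox\omega)H]$ on $\Hil_{A_1B_1}$. Then $\Tr[H(\r\ox\omega-\s\ox\omega)]=\Tr[H_\omega(\r-\s)]$. One would then show that commutators with local generators of the first factor pass through the partial trace:
\[
[X\ox I,H]_\omega=[X,H_\omega],
\]
and that $\|\cdot\|_{op}$ does not increase under this slicing, since it is a unital completely positive map. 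This gives $\|[X,H_\omega]\|_{op}\le1$, so $H_\omega$ is admissible and the tensorization inequality follows.

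The main obstacle is precisely this generator matching. On the larger space, the generators are built from an orthonormal basis of traceless operators on $\Hil_{A_1}\ox\Hil_{A_2}$. Such a basis contains elements that are not of the form $F^{A_1}_i\ox I_{A_2}$. Moreover, the normalisation $\Tr(F_iF_j)=\d_{ij}$ rescales the operators $F^{A_1}_i\ox I$ by $1/\sqrt{d_{A_2}}$. To handle this I would rely on the basis-independence argument of Proposition~\ref{prop:basic}(iii). The cost operator depends only on $\sum_i X_i\ox X_i$ and $\sum_i X_i^2$, which are invariant under orthogonal change of basis. One may therefore choose a basis adapted to the splitting
\[
\mathfrak{su}(d_{A_1}d_{A_2})=\mathfrak{su}(d_{A_1})\ox I\oplus I\ox\mathfrak{su}(d_{A_2})\oplus\mathfrak{su}(d_{A_1})\ox\mathfrak{su}(d_{A_2}).
\]
The slicing map then either kills or reproduces each group of generators, with possibly a dimension-dependent constant that must be checked.

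If that constant is not $1$, the fallback is to work with the seminorm $\|\cdot\|_{\Lip}$ of \eqref{eq:lipdef} instead. This seminorm is defined by a supremum over all operators of norm at most one on $A$ or $B$, so it is basis free. For it the slicing argument goes through cleanly, since $A_1$-local unit-norm operators are $A$-local unit-norm operators. This at least gives the statement at the level of the dual quantities.
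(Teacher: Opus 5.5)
\paragraph{Verdict: genuine gap.} Your argument depends on the tensorization step $W_1(\rho\otimes\omega,\sigma\otimes\omega)\le W_1(\rho,\sigma)$, and you never establish it. In the convention you adopt, it is false. That convention takes a fresh orthonormal basis of traceless operators on $\Hil_{A_1}\otimes\Hil_{A_2}$, which is how \eqref{eq:orth} reads for the composite system.

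The slicing identity $[X\otimes I,H]_\omega=[X,H_\omega]$ and the contractivity of the slicing map are fine. The trouble is normalisation. In your adapted basis, $F_i^{A_1}$ appears only as $F_i^{A_1}\otimes I_{A_2}/\sqrt{d_{A_2}}$. So admissibility of $H$ gives only $\|[F_i^{A_1}\otimes I,H]\|_{op}\le\sqrt{d_{A_2}}$, and your argument yields the constant $\sqrt{\max(d_{A_2},d_{B_2})}$, not $1$.

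For qubits this constant is actually attained. Let $A_1,A_2,B_1,B_2$ be qubits and use your adapted basis $\{\tfrac12\sigma_\mu\otimes\sigma_\nu\}_{(\mu,\nu)\neq(0,0)}$ on $A_1A_2$ and on $B_1B_2$.
\begin{itemize}
\item Every commutator of $\sqrt2\,K\otimes I_{A_2B_2}$ with a generator is either $0$ or $[\tfrac{1}{\sqrt2}\sigma_\mu\otimes I_{B_1},K]\otimes\sigma_\nu$ (or its $B$-analogue).
\item Hence $\sqrt2\,K\otimes I$ is admissible whenever $K$ is, and \eqref{eq:W1dualexact} gives $W_1(\rho\otimes\omega,\sigma\otimes\omega)\ge\sqrt2\,W_1(\rho,\sigma)$.
\item Since $\Tr_{A_2B_2}$ maps separable states to separable states, the same test operators give $\cE(\rho_1\otimes\omega)\ge\sqrt2\,\cE(\rho_1)$ for every $\omega$.
\item Take $\omega$ separable and $\rho_1=|\Phi^+\rangle\langle\Phi^+|$, where Theorem~\ref{th:pure} gives $\cE(\rho_1)\ge\frac12$. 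This contradicts the statement, so no refinement can close the step under this reading.
\end{itemize}

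Your fallback does not help either. By \eqref{eq:W1dual}, the $\|\cdot\|_{\Lip}$-dual distance is only a \emph{lower} bound on $W_1$. Its subadditivity therefore gives no upper bound on $\cE(\rho_1\otimes\rho_2)$.

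\paragraph{What the paper does.} The missing ingredient is the convention the paper's proof rests on. It takes the generator set of the composite system to be $\cK_0^{(1)}\otimes I\cup I\otimes\cK_0^{(2)}$. That choice is exactly what gives $C_{12}=C_1\otimes I+I\otimes C_2$. It is a choice, not a consequence of \eqref{eq:orth}, and it is precisely the generator matching you flagged.

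The paper then stays on the primal side:
\begin{itemize}
\item it tensors optimal couplings, $\tau_1\otimes\tau_2\in\Pi(\rho_1\otimes\rho_2,\sigma_1\otimes\sigma_2)$;
\item it reads off $W_1(\rho_1\otimes\rho_2,\sigma_1\otimes\sigma_2)^2\le W_1(\rho_1,\sigma_1)^2+W_1(\rho_2,\sigma_2)^2$ from additivity of the cost;
\item it concludes with $\sqrt{a+b}\le\sqrt a+\sqrt b$.
\end{itemize}
So it needs neither a triangle inequality nor a tensorization lemma.

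\paragraph{How to repair your route.} Under the same convention, your proof also closes. Admissibility on the large space then contains $\|[X\otimes I,H]\|_{op}\le1$ for every $X\in\cK_0^{(1)}$ (and the mirror condition for $\cK_0^{(2)}$) verbatim. Slicing then gives both tensorization inequalities with constant $1$, and the triangle inequality finishes. That is a valid dual-side alternative to the paper's argument. It does, however, rely on \eqref{eq:W1dualexact} to make $W_1$ a metric, whereas the coupling argument uses only the primal definition.
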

\begin{proof}
For $i=1,2$ we choose separable states $\s_i$ that achieve the infimum in the definition of $\cE(\r_i)$ (the infimum is attained by compactness). The tensor product $\s_1\ox\s_2$ is separable, therefore we have
\[
\cE(\r_1\ox\r_2)\le W_1(\r_1\ox\r_2,\s_1\ox\s_2).
\]
The cost operator for the product system is $C_{12}=C_1\ox I + I\ox C_2$, because the set of local observables on $\Hil_{A_1B_1}\ox\Hil_{A_2B_2}$ consists exactly of operators of the form $X_1\ox I$ and $I\ox X_2$ with $X_i\in\cK_0^{(i)}$. If $\ta_i$ is an optimal coupling between $\r_i$ and $\s_i$, then $\ta_1\ox\ta_2$ is a coupling between the product states and
\[
\tr[C_{12}\,(\ta_1\ox\ta_2)]=\tr[C_1\ta_1]+\tr[C_2\ta_2].
\]
Taking the infimum over couplings gives
\[
W_1(\r_1\ox\r_2,\s_1\ox\s_2)^2
\le W_1(\r_1,\s_1)^2+W_1(\r_2,\s_2)^2.
\]
Using $\sqrt{a+b}\le\sqrt a+\sqrt b$ we obtain
\begin{align}
    W_1(\r_1\ox\r_2,\s_1\ox\s_2)\le
W_1(\r_1,\s_1)+W_1(\r_2,\s_2)\nonumber\\
=\cE(\r_1)+\cE(\r_2),
\end{align}
which completes the proof.
\end{proof}
A further consequence of the metric nature of $W_1$ is a simple relationship between $\cE$ and the trace distance.
\begin{theorem}\label{th:tracedist}
For every bipartite state $\r$,
\[
\inf_{\s\in\SEP}\|\r-\s\|_{\mathrm{Tr}}\le 2\,\cE(\r).
\]
\end{theorem}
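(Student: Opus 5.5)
The plan is to compare $W_1$ with the trace norm using the dual lower bound \eqref{eq:W1dual}, then take the infimum over separable states. The key observation is that any observable with small operator norm automatically has small Lipschitz seminorm. For a self‑adjoint $H$ and any $A\in\B(\Hil_A)$ with $\|A\|_{op}\le1$,
\[
\|[A\ox I_B,H]\|_{op}\le 2\|A\ox I_B\|_{op}\|H\|_{op}\le 2\|H\|_{op},
\]
and the same bound holds for $I_A\ox B$. Hence $\|H\|_{op}\le \tfrac12$ implies $\|H\|_{\Lip}\le1$, so such an $H$ is admissible in \eqref{eq:W1dual}.

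Next, I fix any $\s\in\SEP$ and let $P$ be the projector onto the positive part of the self‑adjoint operator $\r-\s$. I choose the test observable
\[
H=\tfrac12\bigl(P-(I-P)\bigr).
\]
It satisfies $\|H\|_{op}=\tfrac12$, so by the observation above $\|H\|_{\Lip}\le1$. By the Jordan decomposition of $\r-\s$,
\[
\tr[H(\r-\s)]=\tfrac12\,\tr|\r-\s| = \tfrac12\|\r-\s\|_{\mathrm{Tr}}.
\]
Inserting this into \eqref{eq:W1dual} gives $W_1(\r,\s)\ge \tfrac12\|\r-\s\|_{\mathrm{Tr}}$ for every $\s$. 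This is exactly the pointwise comparison we want.

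Finally, I take the infimum over $\s\in\SEP$ on both sides:
\[
\inf_{\s\in\SEP}\|\r-\s\|_{\mathrm{Tr}}\le 2\inf_{\s\in\SEP}W_1(\r,\s)=2\,\cE(\r).
\]
No attainment of the infimum is needed, because the inequality holds pointwise in $\s$.

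The only step that needs care is the admissibility of $H$: the commutator estimate $\|[X,H]\|_{op}\le 2\|X\|_{op}\|H\|_{op}$ must be matched with the normalization $\|A\|_{op}\le1$ in the definition \eqref{eq:lipdef}. That normalization is why the factor $\tfrac12$ in $H$, and hence the constant $2$ in the statement, appears. Everything else is routine.
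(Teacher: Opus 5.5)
Your proof is correct, but it takes a different route from the paper's. The paper obtains the pointwise comparison $\|\r-\s\|_{\mathrm{Tr}}\le 2W_1(\r,\s)$ by citing a lemma of De Palma et al., and then takes the infimum over $\SEP$. You instead \emph{derive} that comparison inside the paper's own framework, from the Lipschitz dual bound \eqref{eq:W1dual}. The estimate $\|[X,H]\|_{op}\le 2\|X\|_{op}\|H\|_{op}$ makes every $H$ with $\|H\|_{op}\le\frac12$ admissible, and you then insert the Helstrom-type observable $P-\frac12 I$.

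The paper's version is a one-liner, but it imports a second external result, and it does not discuss how that result relates to the cost-operator definition of $W_1$ adopted here. Yours reduces the theorem to the single input \eqref{eq:W1dual}, which already underlies every other bound in the paper, and it shows plainly where the constant $2$ comes from.

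Your route also yields slightly more. Apply Sion's minimax theorem to the bilinear function $\tr[H(\r-\s)]$ on the compact convex sets $\{H=H^\dagger:\|H\|_{op}\le\frac12\}$ and $\SEP$. This gives $\frac12\inf_{\s\in\SEP}\|\r-\s\|_{\mathrm{Tr}}=\max_{\|H\|_{op}\le 1/2}\bigl(\tr[H\r]-\a(H)\bigr)\le\mathcal{W}(\r)$. So the trace distance to the separable set is already controlled by twice the witness quantity of Proposition~\ref{prop:witness-variational}, which by \eqref{eq:emax} lies below $\cE(\r)$.
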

\begin{proof}
By Lemma~3.3 of \cite{DePalma2021}, for any two states $\r,\s$ one has
$\|\r-\s\|_{\mathrm{Tr}}\le 2\,W_1(\r,\s)$.
Taking the infimum over $\s\in\SEP$ yields
\[
\inf_{\s\in\SEP}\|\r-\s\|_{\mathrm{Tr}}\le 2\inf_{\s\in\SEP}W_1(\r,\s)
=2\,\cE(\r).
\]
\end{proof}
More generally, the Lipschitz constraint controls the deviation of any local observable from its maximal separable value.
\begin{theorem}\label{th:corrbound}
Let $A\in\B(\Hil_A),\;B\in\B(\Hil_B)$ be self‑adjoint
with $\|A\|_{op}\le1,\|B\|_{op}\le1$. Then
\[
\bigl|\tr[(A\ox B)\r]-\max_{\s\in\SEP}\tr[(A\ox B)\s]\bigr|
\le 2\,\cE(\r).
\]
\end{theorem}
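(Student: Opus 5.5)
The plan is to run the whole argument through the dual inequality \eqref{eq:emax} (equivalently Theorem~\ref{th:dual}), with the product observable $A\ox B$ as the test operator after a suitable rescaling. The first step is to bound $\|A\ox B\|_{\Lip}$. For any $C\in\B(\Hil_A)$ with $\|C\|_{op}\le1$ we have $[C\ox I_B,A\ox B]=[C,A]\ox B$. Hence
\[
\|[C\ox I_B,A\ox B]\|_{op}\le \|[C,A]\|_{op}\,\|B\|_{op}\le 2\|C\|_{op}\|A\|_{op}\|B\|_{op}\le 2 ,
\]
and symmetrically for $I_A\ox D$ with $\|D\|_{op}\le 1$. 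Therefore $\|A\ox B\|_{\Lip}\le 2$, and $H=\tfrac12 A\ox B$ is admissible in \eqref{eq:emax}.

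The second step evaluates the support functional. By positive homogeneity, $\a(H)=\tfrac12\max_{\s\in\SEP}\tr[(A\ox B)\s]$, where the maximum is attained by compactness of $\SEP$. Inserting into \eqref{eq:emax} gives
\[
\tr[(A\ox B)\r]-\max_{\s\in\SEP}\tr[(A\ox B)\s]\le 2\,\cE(\r).
\]
This already establishes the statement whenever the quantity inside the absolute value is nonnegative, which is the case of interest: entanglement pushing a correlator beyond its separable (Bell-type) bound. The same argument applied to $-A\ox B$ gives the mirror inequality $\min_{\s\in\SEP}\tr[(A\ox B)\s]-\tr[(A\ox B)\r]\le 2\cE(\r)$.

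The main obstacle is the other sign, where $\tr[(A\ox B)\r]$ lies below the separable maximum. There the plan is to use the optimal separable state $\s^*$ in Definition~\ref{def:ent} and the dual inequality \eqref{eq:W1dual} to get $|\tr[(A\ox B)(\r-\s^*)]|\le 2W_1(\r,\s^*)=2\cE(\r)$. One then still needs $\max_{\s\in\SEP}\tr[(A\ox B)\s]-\tr[(A\ox B)\s^*]$ to be controlled, and this is not available in general. For separable $\r$ we have $\cE(\r)=0$, while the gap $\max_{\s\in\SEP}\tr[(A\ox B)\s]-\tr[(A\ox B)\r]$ can be positive (take $\r=I/(d_Ad_B)$ and $A\ox B=\sigma_z\ox\sigma_z$). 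So the absolute-value form cannot hold literally for all states. The proof I would write therefore establishes the inequality for the positive part $\bigl(\tr[(A\ox B)\r]-\max_{\s\in\SEP}\tr[(A\ox B)\s]\bigr)_+$, which is the reading under which the stated bound with $|\cdot|$ is valid, and records the two-sided estimate against $\s^*$ as the correct absolute-value companion.
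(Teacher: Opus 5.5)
Your proof of the one-sided inequality is the same as the paper's. You bound $\|A\ox B\|_{\Lip}\le 2$ via $[C\ox I_B,A\ox B]=[C,A]\ox B$, insert $H=\tfrac12 A\ox B$ into \eqref{eq:emax}, and rearrange.

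Your diagnosis of the absolute value is correct, and the paper's proof is wrong at exactly that point. The paper closes with ``Replacing $H$ by $-H$ yields the absolute value.'' But $\a(-H)=-\min_{\s\in\SEP}\tr[H\s]$, which is not $-\a(H)$ in general. So the substitution gives only your mirror inequality $\min_{\s\in\SEP}\tr[(A\ox B)\s]-\tr[(A\ox B)\r]\le 2\cE(\r)$. It gives no bound on $\max_{\s\in\SEP}\tr[(A\ox B)\s]-\tr[(A\ox B)\r]$.

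Your example is a valid counterexample to the statement as written, within the paper's own framework. Take $\r=I/4$ and $A=B=\sigma_z$. Then $\cE(\r)=0$ by Proposition~\ref{prop:basic}(i), while the left-hand side equals $|0-1|=1$.

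The statement that is actually provable is the one you propose: the positive-part bound together with its mirror. Equivalently, the distance from $\tr[(A\ox B)\r]$ to the interval $[\min_{\s\in\SEP}\tr[(A\ox B)\s],\,\max_{\s\in\SEP}\tr[(A\ox B)\s]]$ is at most $2\cE(\r)$.

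Your companion estimate $|\tr[(A\ox B)(\r-\s^*)]|\le 2\cE(\r)$ is also fine. It uses that the infimum in Definition~\ref{def:ent} is attained, which the paper asserts in the proof of Theorem~\ref{th:subadd}.
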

\begin{proof}
We consider $H=A\ox B$. For any $X=C\ox I$ with $\|C\|_{op}\le1$, we have
\begin{align}
    \|[X,H]\|_{op}=\|[C,A]\ox B\|_{op}\le \|[C,A]\|_{op}\,\|B\|_{op}\nonumber\\\le 2\|C\|_{op}\|A\|_{op}\|B\|_{op}\le2.
\end{align}

The same estimate holds for $X=I\ox D$. From the definition of Lipschitz norm in \ref{eq:lipdef}, we have $\|H\|_{\Lip}\le2$.
Thus $H/2$ satisfies $\|H/2\|_{\Lip}\le1$ and by the dual representation
\[
\cE(\r)\ge \tr[(H/2)\r]-\a(H/2)
=\frac12\bigl(\tr[H\r]-\a(H)\bigr).
\]
Because $\a(H)=\max_{\s\in\SEP}\tr[H\s]$, rearranging gives
$\tr[H\r]-\a(H)\le 2\cE(\r)$.
Replacing $H$ by $-H$ yields the absolute value.
\end{proof}

We now examine the average entanglement of random pure states chosen from the Haar measure. 
The following theorem provides a lower bound on the expected value of the entanglement measure $\cE(\rho)$.
\begin{theorem}\label{th:haar}
Let $\r$ be a pure state drawn from the Haar measure on $\bbC^d\ox\bbC^d$ ($d\ge2$).
Then we have
\[
\bbE\bigl[\cE(\r)\bigr]\;\ge\;\frac{\p}{4}\Bigl(1-\frac1d\Bigr),
\], where $\bbE$ denote the expectation of $\cE(\r)$.
\end{theorem}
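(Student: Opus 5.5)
The plan is to average the pure-state bound of Theorem~\ref{th:pure} over the Haar measure. Since that bound holds pointwise,
\[
\bbE[\cE(\r)] \ge \frac1d\Bigl(\bbE\Bigl[\bigl(\textstyle\sum_{i}\sqrt{\m_i}\bigr)^2\Bigr]-1\Bigr),
\]
so everything reduces to a lower bound on the second moment of $\sum_i\sqrt{\m_i}$. Expanding the square and using $\sum_i\m_i=1$ together with exchangeability of the Schmidt coefficients gives
\[
\bbE\Bigl[\bigl(\textstyle\sum_i\sqrt{\m_i}\bigr)^2\Bigr]=1+d(d-1)\,\bbE\bigl[\sqrt{\m_1\m_2}\bigr].
\]
The target $\frac{\p}{4}(1-\frac1d)$ corresponds to $\bbE\bigl[(\sum_i\sqrt{\m_i})^2\bigr]\ge 1+\frac{\p}{4}(d-1)$.

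To compute this moment I will use the Gaussian representation of Haar-random pure states. Write $|\ps\rangle=\mathrm{vec}(G)/\|G\|_F$, where $G$ is a $d\times d$ matrix with i.i.d.\ standard complex Gaussian entries ($\bbE|G_{kl}|^2=1$). Then $\sqrt{\m_i}=s_i(G)/\|G\|_F$, so $\sum_i\sqrt{\m_i}=\|G\|_{\tr}/\|G\|_F$. Because the radial part $\|G\|_F$ is independent of the direction $G/\|G\|_F$, we get
\[
\bbE\bigl[\|G\|_{\tr}^2\bigr]=\bbE\bigl[\|G\|_F^2\bigr]\,\bbE\Bigl[\bigl(\textstyle\sum_i\sqrt{\m_i}\bigr)^2\Bigr]=d^2\,\bbE\Bigl[\bigl(\textstyle\sum_i\sqrt{\m_i}\bigr)^2\Bigr].
\]
The problem is thus to lower-bound $\bbE\|G\|_{\tr}^2$. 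For this I will use the variational fact $\|G\|_{\tr}\ge\sum_k|G_{kk}|$, which follows from $\|G\|_{\tr}=\max_U|\tr(UG)|$ with $U$ a diagonal phase matrix. For a standard complex Gaussian, $\bbE|g|=\sqrt{\p}/2$ and $\bbE|g|^2=1$, and the diagonal entries are independent. Hence
\[
\bbE\bigl(\textstyle\sum_k|G_{kk}|\bigr)^2=d+d(d-1)\frac{\p}{4}.
\]
This is precisely where the factor $\p/4$ enters.

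The main obstacle is the normalization step. The diagonal estimate alone gives only
\[
\bbE\Bigl[\bigl(\textstyle\sum_i\sqrt{\m_i}\bigr)^2\Bigr]\ge\frac{1+(d-1)\p/4}{d},
\]
which is a factor $d$ too weak to reach $1+\frac{\p}{4}(d-1)$. The test "diagonal phase" unitary captures only $d$ of the $d^2$ entries. I would first try to improve it by replacing the diagonal phases with a $G$-dependent unitary, for instance the polar part of a $d\times d$ block of independent Gaussians. Failing that, I would bound $\bbE[s_1s_2]$ directly from the joint density of the Laguerre unitary ensemble.

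I expect this step to be genuinely delicate. Heuristically, for large $d$ the quarter-circle law gives $\bbE\|G\|_{\tr}\approx\frac{8}{3\p}d^{3/2}$. This suggests $\frac1d\bbE\bigl(\sum_i\sqrt{\m_i}\bigr)^2\to\frac{64}{9\p^2}\approx 0.72$, which is below $\p/4\approx0.785$. So the constant claimed in the statement should be checked carefully, and it may need to be replaced by the exact value produced by this Gaussian computation.
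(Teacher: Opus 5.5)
Your reduction to $\bbE\bigl[(\sum_i\sqrt{\mu_i})^2\bigr]$ is exactly the paper's first step, and your Gaussian bookkeeping is correct. However, the proposal does not prove the statement. The step you flag cannot be completed, because the inequality it requires, $\bbE\bigl[(\sum_i\sqrt{\mu_i})^2\bigr]\ge 1+\frac{\pi}{4}(d-1)$, fails already for $d=2$ and for all large $d$.

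The paper obtains this inequality only by asserting that the Schmidt coefficients of a Haar-random state on $\bbC^d\ox\bbC^d$ are $\operatorname{Dir}(1,\dots,1)$-distributed. From that it gets $\bbE[\sqrt{\mu_i\mu_j}]=\Gamma(3/2)^2\Gamma(d)/\Gamma(d+1)=\pi/(4d)$. The moment is computed correctly, but the distributional claim is wrong. The induced measure in the cited Zyczkowski--Sommers work has density proportional to $\prod_{i<j}(\mu_i-\mu_j)^2$ on the simplex. $\operatorname{Dir}(1,\dots,1)$ is instead the law of $(|g_k|^2/\|g\|^2)_k$ for a Gaussian vector $g\in\bbC^d$. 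Consequently the paper's value $1+\frac{\pi}{4}(d-1)=\bbE(\sum_k|G_{kk}|)^2/\bbE\sum_k|G_{kk}|^2$ amounts to treating the Schmidt vector as the normalized \emph{diagonal} of your $G$. It normalizes by $d$ instead of $d^2$, which is precisely the factor $d$ you could not recover.

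Your own setup confirms that the true value is smaller.
\begin{itemize}
\item For $d=2$, the QR (Bartlett) decomposition gives $\bbE|\det G|=\Gamma(3/2)\Gamma(5/2)=3\pi/8$. Hence $\bbE\|G\|_{\mathrm{tr}}^2=4+3\pi/4$ and $\bbE\bigl[(\sqrt{\mu_1}+\sqrt{\mu_2})^2\bigr]=1+3\pi/16$. Equivalently, $\mu_1$ has density $3(2x-1)^2$ on $[0,1]$, and $\bbE\sqrt{\mu_1\mu_2}=3\pi/32$, not $\pi/8$. The averaged bound of Theorem~\ref{th:pure} is therefore $3\pi/32\approx0.295$, strictly below the claimed $\pi/8\approx0.393$.
\item For large $d$ your quarter-circle heuristic is in fact rigorous. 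Almost surely $\|G\|_{\mathrm{tr}}/d^{3/2}\to8/(3\pi)$ and $\|G\|_F/d\to1$, and $(\sum_i\sqrt{\mu_i})^2/d\le1$ permits dominated convergence. This gives the limit $64/(9\pi^2)<\pi/4$.
\end{itemize}
So neither a smarter test unitary nor an exact Laguerre-ensemble computation can rescue this route. Averaging Theorem~\ref{th:pure} yields at most these exact values, and replacing the constant by the exact Gaussian value, as you propose, is the right repair.

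Be careful not to conclude that the displayed inequality for $\cE$ is itself false. Take the cost operator literally as in \eqref{eq:C}. The completeness relation $\sum_i F_i^A\ox F_i^A=\mathrm{SWAP}-I/d$ on two copies of $\Hil_A$ gives $C\ge4(d-1)I$. Hence $\cE(\rho)\ge2\sqrt{d-1}$ for every state, and the inequality holds vacuously. This exposes a defect in the definition, since it contradicts Proposition~\ref{prop:basic}(i); it does not support the constant $\pi/4$.
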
 
\begin{proof}
By Theorem~\ref{th:pure},
\[
\cE(\r)\ge\frac1d\Bigl(\bigl(\sum_{i=1}^d\sqrt{\m_i}\bigr)^2-1\Bigr),
\]
hence we have
\[
\bbE[\cE(\r)]\ge\frac1d\Bigl(\bbE\bigl[(\sum_i\sqrt{\m_i})^2\bigr]-1\Bigr).
\]
The Schmidt coefficients $\m_i$ are distributed according to the Dirichlet distribution $\operatorname{Dir}(1,\dots,1)$ on the simplex \cite{Zyczkowski01}. Using standard Dirichlet integrals, we have
\[
\bbE[\m_i]=\frac1d,\qquad
\bbE[\sqrt{\m_i}\sqrt{\m_j}]=\frac{\p}{4d}\;(i\neq j).
\]
Therefore we have
\[
\bbE\bigl[(\sum_i\sqrt{\m_i})^2\bigr]
= d\cdot\frac1d + d(d-1)\cdot\frac{\p}{4d}
= 1 + \frac{\p}{4}(d-1),
\]
and the bound follows.
\end{proof}

The subadditivity established in Theorem~\ref{th:subadd} suggests investigating the asymptotic behavior of the Wasserstein entanglement measure under tensor powers. For any state $\rho$, define the regularized Wasserstein entanglement measure
\[
\mathcal{E}_{\mathrm{reg}}(\rho) := \liminf_{n \to \infty} \frac{1}{n} \inf_{\sigma_n \in \mathcal{SEP}_n} W_1(\rho^{\otimes n}, \sigma_n),
\]
where $\mathcal{SEP}_n$ denotes the separable states on $\mathcal{H}_{AB}^{\otimes n}$.

From the additivity property of the cost operator one obtains 
\(W_1(\rho^{\otimes n}, \sigma^{\otimes n}) = \sqrt{n}\, W_1(\rho, \sigma)\), and together with Theorem~\ref{th:subadd} we have
\[
\mathcal{E}_{\mathrm{reg}}(\rho) \le \mathcal{E}(\rho).
\]
Faithfulness of $\mathcal{E}$ implies $\mathcal{E}_{\mathrm{reg}}(\rho)=0$ whenever $\mathcal{E}(\rho)=0$.

A central open problem is whether $\mathcal{E}_{\mathrm{reg}}(\rho)=\mathcal{E}(\rho)$ for all $\rho$, i.e., whether the single‑copy minimum Wasserstein distance already gives the asymptotic rate. If equality holds, then $\mathcal{E}$ itself would be a ``regular'' entanglement measure, a property that is rare among geometrically defined measures.

More importantly, the regularized measure admits a natural operational interpretation via the witness picture developed in Section~\ref{sec:witness}. We formulate this as a conjecture, which serves as a quantum Sanov‑type statement for the Wasserstein metric.

\begin{conjecture}
\label{conj:Sanov}
For every bipartite state $\rho$,
\begin{align}
\label{conj}
\lim_{n\to\infty} -\frac{1}{n} \log
\sup_{\substack{0\le W_n\le I\\ \|W_n\|_{\mathrm{Lip},n}\le 1}}
\inf_{\sigma_n\in \mathcal{SEP}_n}
\operatorname{Tr}\bigl[ W_n(\rho^{\otimes n}-\sigma_n) \bigr]\nonumber \\
= \mathcal{E}_{\mathrm{reg}}(\rho),
\end{align}
where the supremum runs over all Lipschitz-1 entanglement witnesses $W_n$ on $\mathcal{H}_{AB}^{\otimes n}$ normalized such that 
$\min_{\sigma_n} \operatorname{Tr}[W_n \sigma_n] = 0$ (equivalently, $\operatorname{Tr}[W_n \sigma_n]\ge 0$ and we can shift the operator by the identity).

The left-hand side of \eqref{conj} is the optimal exponential decay rate of the violation of a Lipschitz witness in the multi-copy regime; it quantifies how rapidly the hypothesis ``$\rho^{\otimes n}$ is separable'' can be rejected by a Lipschitz-constrained measurement. Proposition~\ref{prop:witness-variational} shows that the single-copy analogue exactly reproduces the dual lower bound for $E(\rho)$, and Theorem~\ref{thm:witness} already turns any such witness violation into a quantitative certificate for $\mathcal{E}(\rho)$.
The left-hand side of \eqref{conj} $\le E_{\mathrm{reg}}(\rho)$ can be derived from the geometry of $W_1$ and standard large-deviation arguments; the challenging part is to prove the left-hand side of \eqref{conj} $\ge E_{\mathrm{reg}}(\rho)$ . We leave this as an open problem for future investigation, which would firmly anchor the Wasserstein entanglement measure in the resource‑theoretic framework of quantum hypothesis testing.
\end{conjecture}

\section{Discussion}\label{sec:outlook}

The quantum Wasserstein distance $W_1$ equips the state space with a canonical Lipschitz structure that can be tailored to a given set of observables. Defining an entanglement measure as the distance to the separable cone embeds entanglement theory into the mature field of optimal transport and metric geometry. To the best of our knowledge, all previously known entanglement measures are based on divergences, fidelity, or norms that do not carry a built‑in Lipschitz dual. In contrast, the Lipschitz constraint in \eqref{eq:lipdef} naturally singles out the non‑local components of an observable, thereby providing a geometric separation between classical correlations (captured by the separable states) and genuine quantum correlations.

The dual formulation $\cE(\r)=\max_{\|H\|_{\Lip}\le1} (\tr[H\r]-\a(H))$ should be compared to the usual witness‑based dual $\widetilde{C}(\r)=\max_{L\ge0,\tr L=1}(-\tr[W\r])$. Whereas the witness dual involves a positivity constraint and a specific linear witness, our Lipschitz dual involves a commutator‑type smoothness condition. This difference has two important consequences:
\begin{enumerate}[label=(\roman*)]
\item The Lipschitz ball is the unit ball of a norm that is often easier to characterise in finite dimensions (norm equivalence, explicit constants).
\item The dual is manifestly stable under local unitaries and LOCC operations directly from the metric data processing inequality, avoiding the need for optimization over all possible witnesses.
\end{enumerate}

From an operational viewpoint, Theorems~\ref{thm:witness} and~\ref{prop:witness-variational} are particularly significant. 
They demonstrate that $E$ is not only a mathematically natural geometric measure but also experimentally testable: 
any laboratory implementation of a Lipschitz-bounded entanglement witness immediately translates a negative expectation value into a guaranteed lower bound on the entanglement content. 
Moreover, the witness-based variational principle reveals that the task of finding the best lower bound on $\mathcal{E}$ is equivalent to searching for an optimal Lipschitz-1 witness, 
a problem that can be tackled with standard semidefinite programming. 
This bridges the abstract optimal-transport formalism with practical entanglement quantification and opens the door to numerical estimation of $\mathcal{E}$ for arbitrary states, 
substantially enhancing the applicability of the measure in realistic quantum information tasks.
The subadditivity (Theorem~\ref{th:subadd}) and the correlation function bound (Theorem~\ref{th:corrbound}) are direct offsprings of this metric structure and have no natural counterpart in the witness formalism.

The Lipschitz formulation opens bridges to several branches of mathematics.
\begin{itemize}
\item \textbf{Non‑commutative geometry} The commutator condition $\|[X,H]\|_{op}\le1$ can be interpreted as a quantum analogue of the classical $1$‑Lipschitz condition on a manifold; hence $\cE$ is associated with a quantum metric space.
\item \textbf{Free probability} The bound for random pure states (Theorem~\ref{th:haar}) already employs exact Dirichlet moments; one may investigate the large‑$d$ limit in the framework of free semicircular laws and their Wasserstein transport.
\item \textbf{Large deviations} The conjecture $\lim_{n\to\infty}-\frac1n\log\inf_{\s_n\in\SEP_n} \tr[\r^{\ox n}\s_n]=\cE_{reg}(\r)$ (Conjecture~\ref{conj:Sanov}) remains a challenging open problem. Resolving it would likely require new insights at the interface of optimal transport and quantum hypothesis testing.
\item \textbf{Convex geometry} The set $\{\r:\cE(\r)\le t\}$ is a convex neighbourhood of the separable cone; its volume ratio with the state space and its polar body (related to the Lipschitz ball) could lead to a Blaschke–Santal\'{o} inequality.
\end{itemize}

\section{Conclusion}
\label{con}
We have introduced the Wasserstein entanglement measure, a novel distance-based quantifier rooted in quantum optimal transport. 
By leveraging the data-processing inequality of the order-1 quantum Wasserstein distance, the measure simultaneously fulfills faithfulness, convexity, LOCC monotonicity, and continuity in a unified geometric framework. 
The Lipschitz dual formulation transforms the geometric minimization into a tractable maximization, enabling explicit lower bounds for pure and mixed states. 
For two-qubit systems we obtain a sharp constant $1/3$, and for Haar-random pure states the expected entanglement is shown to be at least $\frac{\pi}{4}(1-1/d)$. 
The metric structure further yields subadditivity, trace-distance bounds, and limits on local observables.
A central contribution of this work is the quantitative link to entanglement witnesses: any witness violation provides a certified lower bound on $\cE$, and the dual variational problem is rigorously equivalent to finding an optimal Lipschitz-1 witness. 
This connection makes the measure experimentally accessible and amenable to semidefinite programming, thereby bridging abstract resource theories with practical entanglement detection.
Our results establish the Wasserstein entanglement measure as a functional tool linking quantum information, optimal transport, and non-commutative geometry.
Future challenges include proving the conjectured quantum Sanov theorem with rate $\cE$, extending the framework to infinite dimensions and continuous-variable systems, and exploring its implications for quantum complexity and many-body physics. 
The Lipschitz dual method introduced here is expected to become a standard instrument in entanglement theory and beyond.

\section{acknowledegments}
    Authors were supported by the NNSF  of China (Grant No.12471427), and the Fundamental Research Funds for the Central Universities (Grant No. ZG216S2110).

\bibliographystyle{unsrt}
\bibliography{entanglement}

\end{document}